\pgfplotsset{compat=newest}
\def\BibTeX{{\rm B\kern-.05em{\sc i\kern-.025em b}\kern-.08em
    T\kern-.1667em\lower.7ex\hbox{E}\kern-.125emX}}
\newtheorem{theorem}{Theorem}
\newtheorem{definition}{Definition}
\newtheorem{proposition}{Proposition}
\newtheorem{lemma}{Lemma}
\newtheorem{corollary}{Corollary}
\newtheorem{remark}{Remark}
\newtheorem{assumption}{Assumption}
\newtheorem{problem}{Problem}
\newenvironment{objective}[1]
  {\innerobjective}
  {\endinnerobjective}
\definecolor{myblue}{rgb}{0.00000,0.26667,0.58431}
\definecolor{mygreen}{rgb}{0.65000,0.81000,0.23000}
\definecolor{mygreenfill}{rgb}{0.74902,1.00000,0.00000} 
\definecolor{darkgreen}{RGB}{56, 110, 35}%
\definecolor{darkred}{rgb}{0.38431,0.04314,0.04314}%
\newcommand\acceptedtext{%
  \textcopyright~2026 IEEE. Personal use of this material is permitted.
  Permission from IEEE must be obtained for all other uses, in any current
  or future media, including reprinting/republishing this material for
  advertising or promotional purposes, creating new collective works, for
  resale or redistribution to servers or lists, or reuse of any copyrighted
  component of this work in other works.\\[2pt]
  This is the author's accepted version of the article published in
  IEEE Transactions on Automatic Control. DOI:~10.1109/TAC.2026.3714227%
}
\newcommand\acceptednotice{%
  \AddToShipoutPictureBG*{%
    \begin{tikzpicture}[remember picture,overlay]
      \node[anchor=south, yshift=3mm] at (current page.south)
        {\parbox{\textwidth}{%
          \fontsize{6.75}{7.8}\selectfont
          \acceptedtext}};
    \end{tikzpicture}%
  }%
}
	\newcommand{\Abs}[1][]{\lvert #1 \rvert}        
	\DeclareMathOperator{\signFunc}{sign}				
	\newcommand{\addWithPreComma}[1]{%
  		\if\relax\detokenize{#1}\relax\else,#1\fi
	}
	\newcommand{\combineIndices}[2]{%
	\if\relax\detokenize{#1}\relax
		#2%
	\else
		#1\if\relax\detokenize{#2}\relax\else,#2\fi
	\fi
	}
	\DeclareMathOperator{\diag}{diag}					
	\newcommand{\diff}{\mathrm{d}}                              				
	\newcommand{\dPartial}[2]{\dfrac{\partial #1}{\partial #2}}					
	\newcommand{\dFull}[2]{\dfrac{\diff #1}{\diff #2}}							
	\renewcommand{\vec}[1]{\bm{#1}}						
	\renewcommand{\matrix}[1]{\bm{#1}}					
	\newcommand{\Transpose}{\!\top}             			
	\newcommand{\Euclidean}[1][]{\lVert #1 \rVert_{2}}  
	\newcommand{\vertM}{\;\big\vert\;}
	\newcommand{\rBrack}[1][]{( #1 )}
	\newcommand{\cBrack}[1][]{\{ #1 \}}
	\newcommand{\oft}{(\stime)}     			
	\newcommand{\oftZero}{(\stime[\denoteZero])}    
	\newcommand{\identPE}{\textit{PL}}               
	\newcommand{\identPP}{\textit{PH}}               
\newcommand{\varBase}[3]{%
	\expandafter\newcommand\csname s#1\endcsname[1][]{%
		#2_{\combineIndices{#3}{##1}}%
	}%
	\expandafter\newcommand\csname s#1dot\endcsname[1][]{%
		\dot{#2}_{\combineIndices{#3}{##1}}%
	}%
	\expandafter\newcommand\csname s#1Gopt\endcsname[1][]{%
		\oSP{#2}_{\combineIndices{#3}{##1}}%
	}%
	\expandafter\newcommand\csname v#1\endcsname[1][]{%
		\vec{#2}_{\combineIndices{#3}{##1}}%
	}%
	\expandafter\newcommand\csname v#1dot\endcsname[1][]{%
		\dot{\vec{#2}}_{\combineIndices{#3}{##1}}%
	}%
	\expandafter\newcommand\csname v#1Gopt\endcsname[1][]{%
		\oSP{\vec{#2}}_{\combineIndices{#3}{##1}}%
	}%
	\expandafter\newcommand\csname v#1T\endcsname[1][]{%
		\vec{#2}_{\combineIndices{#3}{##1}}^{\Transpose}%
	}%
}
\newcommand{\denoteZero}{0}				
\newcommand{\denoteOne}{1}				
\newcommand{\denoteEnd}{\mathrm{end}}	
\newcommand{\denoteGopt}{\star}			
\newcommand{\denoteHf}{\mathrm{h}}		
\newcommand{\denoteLf}{\mathrm{l}}		
\newcommand{\denoteOpt}{\diamond}		
\newcommand{\denotesx}{\mathrm{x1}}
\newcommand{\denotesy}{\mathrm{x2}}
\newcommand{\mapBIP}{\bm{a}}
\newcommand{\vfselXcrit}{\vec{c}}				
\newcommand{\vf}[1][]{\vec{f}_{#1}}
\newcommand{\runJ}{j}
\newcommand{\objectiveFunc}{J}              
\newcommand{\sizeDefinition}{m}
\newcommand{\sizeX}[1][]{n_{#1}}			
\newcommand{\ppTOvPB}{\vec{o}}
\newcommand{\matrxEntry}[1][]{p_{#1}}
\newcommand{\matrx}[1][]{\matrix{P}_{#1}}
\newcommand{\sizeU}[1][]{q_{#1}}			
\newcommand{\matchX}{\matrix{Q}}            
\newcommand{\sizePS}{s}						
\newcommand{\stime}[1][]{t_{#1}}			
\newcommand{\stimeEnd}[1][]{T_{#1}}			
\newcommand{\sv}[1][]{v_{#1}}				
\newcommand{\valueFunc}{V}                  
\newcommand{\sizeMParams}[1][]{z_{#1}}		
\newcommand{\ssafetyMargin}{\alpha}			
\newcommand{\sizeScap}{\beta}				
\newcommand{\sStrategyNA}[1][]{\gamma_{#1}}			
\newcommand{\vStrategyNA}[1][]{\vec{\gamma}_{#1}}	
\newcommand{\randomAngle}{\delta}			
\newcommand{\boundary}{\partial}			
\newcommand{\wte}{\zeta}					
\newcommand{\vonScap}[1][]{\vec{\eta}_{#1}}	
\newcommand{\vPB}{\vec{\theta}}				
\newcommand{\pp}{\vartheta}					
\newcommand{\vmp}[1][]{\vec{\iota}_{#1}}	
\newcommand{\vparamBIP}{\bm{\kappa}}
\newcommand{\Scap}{\Lambda}					
\newcommand{\vcondSol}{\vec{\nu}}    
\newcommand{\vSolTrj}[1][]{\vec{\xi}_{#1}}       
\newcommand{\randomVar}{\rho}
\newcommand{\sStrategyFB}[1][]{\varsigma_{#1}}			
\newcommand{\vStrategyFB}[1][]{\vec{\varsigma}_{#1}}	
\newcommand{\stimeBip}{\tau}                   
\newcommand{\relXTrans}{\Phi}				
\newcommand{\so}[1][]{\varphi_{#1}}                   
\newcommand{\vparametrParams}[1][]{\vec{\chi}_{#1}}
\newcommand{\sonBarr}[1][]{\psi_{#1}}					
\newcommand{\vonBarr}[1][]{\vec{\psi}_{#1}}				
\newcommand{\vonBarrdot}[1][]{\dot{\vec{\psi}}_{#1}}	
\newcommand{\syaw}[1][]{\omega_{#1}}
\newcommand{\vSCtrlLaw}{\overline{\vStrategyFB}}
\newcommand{\Steb}{\mathcal{B}}        	
\newcommand{\SCRegion}{\mathcal{C}}
\newcommand{\Sbarr}{\mathcal{K}}                
\newcommand{\Ssemsur}{\mathcal{L}}              
\newcommand{\Smanifold}{\mathcal{M}}			
\newcommand{\Naturals}{\ensuremath{\mathbb{N}}}		
\newcommand{\powerset}{\mathcal{P}}
\newcommand{\Reals}[1][]{\ensuremath{\mathbb{R}^{#1}}}		
\newcommand{\SBIP}{\mathcal{S}}				
\newcommand{\admissU}[1][]{\mathcal{U}_{#1}}	
\newcommand{\Szesc}{\mathcal{V}^-}      			
\newcommand{\Szcap}{\mathcal{V}^+}      			
\newcommand{\Szcapmax}{\hat{\mathcal{V}}^+}      	
\newcommand{\admissX}[1][]{\mathcal{X}_{#1}}    
\newcommand{\reldyn}{\vf^{\vmp}\rBrack[{\vx\oft,\vu[\denoteLf]\oft,\vu[\denoteHf]\oft}]}    
\newcommand{\reldynNE}{\vf^{\vmp}\rBrack[{\vx,\su[\denoteLf],\su[\denoteHf]}]}  
\newacronym{BIP}{BIP}{boundary of the inward-facing part}
\newacronym{CAS}{CAS}{computer algebra system}
\newacronym{fastrack}{FaSTrack}{fast and safe tracking}
\newacronym{MEM}{MEM}{maximum error measure}
\newacronym{WTE}{WTE}{worst-case tracking error}
\newacronym{IP}{IP}{inward-facing part}
\newacronym{ODE}{ODE}{ordinary differential equation}
\newacronym{PDE}{PDE}{partial differential equation}
\newacronym{RTD}{RTD}{Reachability\hyp{}based Trajectory Design}
\newacronym{SOS}{SOS}{sum-of-squares}
\newacronym{TEB}{TEB}{tracking error bound}
\begin{document}
\acceptednotice
\newcommand{\myTitle}[1][]{Captivity-Escape Games as a Means for Safety in Online Motion Generation}

\title{\myTitle}
\author{Christopher Bohn, Manuel Hess, and S\"{o}ren Hohmann, \IEEEmembership{Senior Member, IEEE}
\thanks{The authors are with the Institute of Control Systems, Karlsruhe Institute of Technology, 76131 Karlsruhe, Germany. Corresponding author is Christopher Bohn (e-mail: christopher.bohn@kit.edu).}
}

\maketitle
\begin{abstract}
    This paper addresses conservatism, limited numerical accuracy, and high computational effort in existing methods ensuring safety by design in online model-based motion generation. The presented method employs a novel captivity-escape zero-sum differential game to adapt the planning model’s performance so that resulting reference trajectories are trackable within a prescribed safety margin by a jointly synthesized safety controller. A numerical example demonstrates orders-of-magnitude faster computation and improved numerical accuracy compared to the state of the art.
\end{abstract}

\begin{IEEEkeywords}
    Differential games, model mismatch, motion generation, safety by design, tracking control.
\end{IEEEkeywords}

\glsresetall
\section{Introduction}\label{sec:02_introduction}
Motion generation comprises two tasks: motion planning and tracking of reference trajectories \autocite{herbert.2017}, \autocite{chen.2021}, \autocite{singh.2020}, \autocite{kousik.2020}. Both tasks are crucial for the functional safety of autonomous systems, so formally ensuring the safety of motion generation is essential \autocite{herbert.2017}, \autocite{chen.2021}, \autocite{singh.2020}, \autocite{kousik.2020}, \autocite{schurmann.2022}, \autocite{fisac.2019}. Formally ensuring safety requires accurate consideration of real-world system dynamics through the use of a high-fidelity model \autocite{herbert.2017}, \autocite{chen.2021}, \autocite{singh.2020}, \autocite{kousik.2020}.

Let $ \Naturals = \{0,1,2,\dots\} $ denote the set of natural numbers. For $ \sizeDefinition\in\Naturals $, let $ \Reals[\sizeDefinition] $ denote the $ \sizeDefinition $-dimensional real vector space.

\begin{definition}\label{def:model_hf}
    A high-fidelity model of a dynamic system is given by a vector $ \vx[\denoteHf]\oft \in \admissX[\denoteHf]\subseteq\Reals[\sizeX] $ containing $ \sizeX \in \Naturals $ states, $ \vx[\denoteHf]: \Reals \rightarrow \admissX[\denoteHf] $, a vector $ \vu[\denoteHf]\oft \in \admissU[\denoteHf] \subseteq \Reals[{\sizeU[\denoteHf]}] $ containing $ \sizeU[\denoteHf] \in \Naturals $ inputs, $ \vu[\denoteHf]: \Reals \rightarrow \admissU[\denoteHf] $, a vector $ \vmp[\denoteHf] \in \Reals[{\sizeMParams[\denoteHf]}] $ containing $\sizeMParams[\denoteHf] \in \Naturals $ parameters, and an \gls*{ODE}
    \begin{equation}\label{eq:model_hf}
        \vxdot[\denoteHf]\oft = \vf[\denoteHf]^{\vmp[\denoteHf]}\rBrack[{\vx[\denoteHf]\oft, \vu[\denoteHf]\oft }].
    \end{equation}
    The set $ \admissU[\denoteHf] $ is compact, $ \vf[\denoteHf]^{\vmp[\denoteHf]}: \admissX[\denoteHf] \times \admissU[\denoteHf] \rightarrow \admissX[\denoteHf] $ is Lipschitz continuous in $ \vx[\denoteHf]\oft $ for a fixed $ \vu[\denoteHf]\oft $, and continuously differentiable in $ \vx[\denoteHf]\oft $. Thus, given any measurable function $ \vu[\denoteHf] $ (see \autocite{mitchell.2005}), a unique trajectory exists that solves \eqref{eq:model_hf} \autocite{coddington.1984}.
\end{definition}
\begin{assumption}\label{assump:hf_real_system}
    A high-fidelity model of a dynamic system accurately represents the related real-world system dynamics.
\end{assumption}

Advanced model-based control design methods facilitate the use of a high-fidelity model \autocite{herbert.2017}, \autocite{chen.2021}, \autocite{singh.2020}, \autocite{kousik.2020}, \autocite{kohler.2021}. Therefore, consider a high-fidelity model to be used for tracking.

Conversely, motion planning requires a substantial prediction horizon and fast computation, rendering the use of a high-fidelity model impractical due to computational limitations. Consequently, motion planning commonly uses a low-fidelity model \autocite{herbert.2017}, \autocite{chen.2021}, \autocite{singh.2020}, \autocite{kousik.2020}.
\begin{definition}\label{def:model_lf}
    A low-fidelity model of a dynamic system is given by a vector $ \vx[\denoteLf]\oft \in \admissX[\denoteLf] \subseteq \admissX[\denoteHf] $, with $ \dim \admissX[\denoteLf] = \sizeX[\denoteLf] $, containing $ \sizeX[\denoteLf] \in \Naturals $ states, $ \vx[\denoteLf]: \Reals \rightarrow \admissX[\denoteLf] $, a vector $ \vu[\denoteLf]\oft \in \admissU[\denoteLf] \subseteq \Reals[{\sizeU[\denoteLf]}] $ containing $ \sizeU[\denoteLf] \in \Naturals $ inputs, $ \vu[\denoteLf]: \Reals \rightarrow \admissU[\denoteLf] $, a vector $ \vmp[\denoteLf] \in \Reals[{\sizeMParams[\denoteLf]}] $ containing $\sizeMParams[\denoteLf] \in \Naturals $ parameters, and an \gls*{ODE}
    \begin{equation}\label{eq:model_lf}
        \vxdot[\denoteLf]\oft = \vf[\denoteLf]^{\vmp[\denoteLf]}\rBrack[{ \vx[\denoteLf]\oft, \vu[\denoteLf]\oft }].
    \end{equation}
    \noindent The properties of $ \admissU[\denoteLf] $ and $ \vf[\denoteLf]^{\vmp[\denoteLf]} $ match those of $ \admissU[\denoteHf] $ and $ \vf[\denoteHf]^{\vmp[\denoteHf]} $.
\end{definition}
\begin{figure}[!t]
    \centering
    \begin{tikzpicture}
    \pgfmathdeclarefunction{usBfunc}{1}{%
        \pgfmathparse{sin(deg(#1-0.5)) + #1 - 2.75}%
    }
    \pgfmathdeclarefunction{dusBfunc}{1}{%
    \pgfmathparse{cos(deg(#1-0.5)) + 1}%
    }
    \pgfmathdeclarefunction{usTfunc}{1}{%
        \pgfmathparse{-.7*sin(deg(1+(#1 - 2.217))) + .5*(#1 - 2.217) + 2}%
    }
    \pgfmathdeclarefunction{dusTfunc}{1}{%
    \pgfmathparse{-.7*cos(deg(1+(#1 - 2.217))) + .5}%
    }
    \pgfmathdeclarefunction{pathfunc}{1}{%
    \pgfmathparse{.05 * (#1 + 0.613118701612116)^2 + 0.598119067714960 + 0.002}%
    }
    \pgfmathdeclarefunction{dpathfunc}{1}{%
    \pgfmathparse{.1 * (#1 + 0.613118701612116)}%
    }

    \def\pathStartX{1}
    \def\pathStopX{4}
    \pgfmathsetmacro{\pathStartY}{pathfunc(\pathStartX)}
    \pgfmathsetmacro{\pathStopY}{pathfunc(\pathStopX)}

    \def\closestBX{2.7924}    
    \def\closestTX{2.5564}    
    \pgfmathsetmacro{\closestBY}{usBfunc(\closestBX)}
    \pgfmathsetmacro{\closestTY}{usTfunc(\closestTX)}

    \def\rTEBs{.2}
    \def\rTEBa{.25}
    \def\rTEBl{.45}

    \pgfmathsetmacro{\closestBXA}{\closestBX - (\rTEBa * dusBfunc(\closestBX)) / sqrt(1 + dusBfunc(\closestBX)^2)}
    \pgfmathsetmacro{\closestBYA}{usBfunc(\closestBX) + \rTEBa / sqrt(1 + dusBfunc(\closestBX)^2)}

    \pgfmathsetmacro{\closestTXA}{\closestTX + (\rTEBa * dusTfunc(\closestTX)) / sqrt(1 + dusTfunc(\closestTX)^2)}
    \pgfmathsetmacro{\closestTYA}{usTfunc(\closestTX) - \rTEBa / sqrt(1 + dusTfunc(\closestTX)^2)}



    \begin{axis}[
        xmin=0, xmax=5,
        ymin=.32, ymax=1.95,
        axis equal image,
        scale only axis,
        width=0.8\columnwidth,
        hide axis,
        ]

        \addplot [domain=0:5, samples=100, thick, color=black!80, name path=bottomright]
        {usBfunc(x)};
        \addplot [domain=0:5, samples=100, thick, color=white, name path=lower]
        {-10};
        \addplot[black!20, opacity=0.4] fill between [of=bottomright and lower];
    
        \addplot [domain=0:5, samples=100, thick, color=black!80, name path=topleft]
        {usTfunc(x)};
        \addplot [domain=0:5, samples=100, thick, color=white, name path=upper]
        {10};
        \addplot[black!20, opacity=0.4] fill between [of=topleft and upper];

        \addplot [-stealth, domain=\pathStartX+.1:\pathStopX-.1, samples=100, very thick, color=myblue]
        {pathfunc(x)};

        \addplot [domain=-1:6, samples=100, mark=none, thick, mygreen, name path = TEBaB] (
            {x + (\rTEBa * dusTfunc(x)) / sqrt(1 + dusTfunc(x)^2)},
            {usTfunc(x) - \rTEBa / sqrt(1 + dusTfunc(x)^2)}
        );
        \addplot[mygreenfill!20] fill between [of=TEBaB and topleft];

        \addplot [domain=-1:6, samples=100, mark=none, thick, mygreen, name path = TEBaB] (
            {x - (\rTEBa * dusBfunc(x)) / sqrt(1 + dusBfunc(x)^2)},
            {usBfunc(x) + \rTEBa / sqrt(1 + dusBfunc(x)^2)}
        );
        \addplot[mygreenfill!20] fill between [of=TEBaB and bottomright];


        \addplot [style={solid, fill=red}, only marks]
        coordinates {(\pathStartX,\pathStartY) (\pathStopX,\pathStopY)};

        \node[color=black, anchor=north, font=\footnotesize] at (axis cs: 1.9,1.875) {unsafe states};
        \node[color=black, anchor=south, font=\footnotesize] at (axis cs: 3.75,.44) {unsafe states};
        \node[color=black, font=\footnotesize] at (axis cs: \pathStartX,\pathStartY-.21) {$ \vx[\denoteLf](\stime[\denoteZero]) $};
        \node[color=black, font=\footnotesize] at (axis cs: \pathStopX,\pathStopY-.21) {$ \vx[\denoteLf](\stime[\denoteOne]) $};

        \addplot [stealth-stealth, color=black!80]
        coordinates {
            (\closestBX, \closestBY )
            (\closestBXA,\closestBYA)
        };
        \node[color=black, font=\footnotesize, anchor = west] at (axis cs: \closestBX-.05,\closestBY+.15) {$ \ssafetyMargin $};

        \addplot [stealth-stealth,color=black!80]
        coordinates {
            (\closestTX, \closestTY )
            (\closestTXA,\closestTYA)
        };
        \node[color=black, font=\footnotesize, anchor = west] at (axis cs: \closestTX+.03,\closestTY-.07) {$ \ssafetyMargin $};


        \draw[gray] (rel axis cs:0,0) rectangle (rel axis cs:1,1);
\end{axis}

\end{tikzpicture}
    \caption{
    The figure illustrates the task of planning a reference trajectory from $ \vx[\denoteLf](\stime[\denoteZero]) $ to $ \vx[\denoteLf](\stime[\denoteOne]) $, indicated by the blue arrow. The gray regions indicate unsafe states (e.g., obstacles), the black lines indicate safety-critical constraints, and the green regions indicate the safety margin $ \ssafetyMargin $.
    }
    \label{fig:start_goal_unsafe}
\end{figure}

Crucially, references planned with a low-fidelity model are not necessarily dynamically feasible for the high-fidelity model used for tracking. This potentially results in unavoidable tracking errors that may cause violations of safety-critical constraints, thus posing a safety risk in motion generation.

To address this safety risk, the frameworks in \autocite{herbert.2017}, \autocite{chen.2021}, \autocite{singh.2020}, \autocite{kousik.2020} augment safety-critical constraints in motion planning by a safety margin $ \ssafetyMargin \in \Reals[+] $ that ensures preventing constraint violations under the \gls*{WTE} (see \autoref{fig:start_goal_unsafe}). In addition, a safety controller is used to ensure the tracking error does not exceed $ \ssafetyMargin $. The methods employed in these frameworks determine both the safety margin $ \ssafetyMargin $ and controller based on a specific pair of low-fidelity and high-fidelity models. Note that $ \admissX[\denoteHf] $, $ \admissU[\denoteHf] $, and $ \vmp[\denoteHf] $ can be determined via system identification, whereas $ \admissX[\denoteLf] $, $ \admissU[\denoteLf] $, and $ \vmp[\denoteLf] $ must be selected heuristically to suit the planning task. 
We denote the objective of determining $ \ssafetyMargin $ for a specific pair of models as \ref{O.0}.
\begin{objective}{O.0}\label{O.0}
    Compute a safety margin $ \ssafetyMargin $ that ensures safe motion generation with given $ \vf[\denoteLf]^{\vmp[\denoteLf]} $, $ \admissX[\denoteLf] $, $ \admissU[\denoteLf] $, $ \vmp[\denoteLf] $, $ \vf[\denoteHf]^{\vmp[\denoteHf]} $, $ \admissX[\denoteHf] $, $ \admissU[\denoteHf] $, $ \vmp[\denoteHf] $.
\end{objective}

However, addressing \ref{O.0} with the methods in \autocite{herbert.2017}, \autocite{chen.2021}, \autocite{singh.2020} is computationally intensive, and the numerical accuracy of the safety margin $ \ssafetyMargin $ resulting from \autocite{herbert.2017}, \autocite{chen.2021} is not ensured \autocite{kousik.2020}. Moreover, the heuristically selected $ \admissX[\denoteLf] $, $ \admissU[\denoteLf] $, and $ \vmp[\denoteLf] $, and the resulting safety margin $ \ssafetyMargin $ are likely ill-suited to a given planning task, leading to overly conservative references.\footnote{Restrictive $ \admissX[\denoteLf] $, $ \admissU[\denoteLf] $, $ \vmp[\denoteLf] $ result in low-performance references; nonrestrictive $ \admissX[\denoteLf] $, $ \admissU[\denoteLf] $, $ \vmp[\denoteLf] $ require a large safety margin, inducing conservatism by shrinking the planning space and potentially rendering the planning task infeasible.
} 

To mitigate the existing framework's conservatism, $ \admissX[\denoteLf] $, $ \admissU[\denoteLf] $, and $ \vmp[\denoteLf] $ must be adapted to a safety margin $ \ssafetyMargin $ that suits the specific planning task. However, adapting $ \admissX[\denoteLf] $, $ \admissU[\denoteLf] $, $ \vmp[\denoteLf] $ to a given $ \ssafetyMargin $ using the methods in \autocite{herbert.2017}, \autocite{chen.2021}, \autocite{singh.2020} entails iteratively tuning $ \admissX[\denoteLf] $, $ \admissU[\denoteLf] $, $ \vmp[\denoteLf] $ and recomputing $ \ssafetyMargin $, which is computationally intractable.

In this paper, we address these limitations by presenting a computationally efficient and numerically accurate method that adapts $ \admissX[\denoteLf] $, $ \admissU[\denoteLf] $, and $ \vmp[\denoteLf] $ to a given safety margin $ \ssafetyMargin $, thereby focusing on the following objectives:
\begin{objective}{O.1}\label{O.1}
    Directly compute $ \admissX[\denoteLf] $, $ \admissU[\denoteLf] $, and $ \vmp[\denoteLf] $ that ensure safe motion generation with given $ \ssafetyMargin $, $ \vf[\denoteLf]^{\vmp[\denoteLf]} \! $, $ \vf[\denoteHf]^{\vmp[\denoteHf]} \! $, $ \admissX[\denoteHf] $, $ \admissU[\denoteHf] $, and $ \vmp[\denoteHf] $.
\end{objective}
\begin{objective}{O.2}\label{O.2}
    Design a safety controller that ensures safe tracking within a given safety margin $ \ssafetyMargin $, given $ \vf[\denoteHf]^{\vmp[\denoteHf]} $, $ \admissX[\denoteHf] $, $ \admissU[\denoteHf] $, $ \vmp[\denoteHf] $, and $ \vf[\denoteLf]^{\vmp[\denoteLf]} $, with $ \admissX[\denoteLf] $, $ \admissU[\denoteLf] $, and $ \vmp[\denoteLf] $ as determined per \ref{O.1}.
\end{objective}
\subsection{Related Work}\label{sec:related_work}
In \autocite{herbert.2017}, \autocite{chen.2021}, the authors present the state-of-the-art method for addressing \ref{O.0} and the state-of-the-art framework, \gls*{fastrack}, which uses a safety margin to ensure safe motion generation. The method presented in \autocite{herbert.2017}, \autocite{chen.2021} uses a pursuit-evasion differential game to determine a safety margin and the associated safety controller. The related computations are intensive, as the utilized pursuit-evasion game is solved numerically using Hamilton-Jacobi reachability analysis \autocite{mitchell.2005}.\footnote{Determining a safety margin with decent accuracy takes days, even if both models are simple and low-dimensional.} As a consequence of the numerical methods involved, the numerical accuracy of the resulting safety margin cannot be ensured \autocite{kousik.2020}. This poses a safety risk in \gls*{fastrack}. Furthermore, the \gls*{fastrack} framework uses a fixed safety margin and fixed $ \admissX[\denoteLf] $, $ \admissU[\denoteLf] $, and $ \vmp[\denoteLf] $ across planning environments, which likely results in planning overly conservative references \autocite{fridovich-keil.2018}.

In \autocite{fridovich-keil.2018}, the authors build upon \gls*{fastrack} and reduce conservatism by using a set of motion planners that feature different safety margins. A meta-planning algorithm is used to online select a planner that suits the planning environment. However, \autocite{fridovich-keil.2018} remains conservative, as the set of available planners is limited due to \gls*{fastrack}'s computational intensity. In addition, \autocite{fridovich-keil.2018} inherits \gls*{fastrack}'s limited numerical accuracy.

In \autocite{sahraeekhanghah.2021}, a method is presented that builds upon \gls*{fastrack} and reduces conservatism by constraining motion planning to motion primitives. However, this restricts resulting references to compositions of a limited set of primitives.

A method for addressing \ref{O.0} through sum-of-squares optimization is presented in \autocite{singh.2020}. However, compared to \gls*{fastrack}, the method in \autocite{singh.2020} results in a more conservative safety margin.

The authors in \autocite{kousik.2020} present a motion generation framework that utilizes a safety margin but do not focus on \ref{O.0}.

The objective \ref{O.1} remains unaddressed in the literature. All related works focus on \ref{O.0}, necessitating heuristic selection of $\admissX[\denoteLf]$, $\admissU[\denoteLf]$, and $\vmp[\denoteLf]$. In addition, all existing methods are computationally intensive, and the methods based on the state of the art (\gls*{fastrack}) do not ensure numerical accuracy.
\subsection{Contribution}
The contribution of this work is a method that addresses \ref{O.1}, \ref{O.2}, and also \ref{O.0}. The presented method is based on a captivity-escape game, a novel zero-sum differential game with two players introduced in this work. The results obtained using the presented method are numerically accurate\footnote{Depending on the employed models, computations are either performed analytically or using numerical integration. Approaches exist for determining the errors resulting from numerical integration to ensure safety \autocite[Chap. 9]{quarteroni.2010}.} and the computation time of the presented method is significantly faster\footnote{We consider the computation time of the presented method to be within the timescale that allows for online adaptation of $ \admissX[\denoteLf] $, $ \admissU[\denoteLf] $, and $ \vmp[\denoteLf] $.} than the computation time of the state-of-the-art method \gls*{fastrack}. Consequently, our method complements the frameworks presented in \autocite{herbert.2017}, \autocite{chen.2021}, \autocite{singh.2020}, \autocite{kousik.2020} by reducing computation time, ensuring accuracy, and mitigating conservatism. We demonstrate the presented method using a numerical example and compare it to the state-of-the-art method in \gls*{fastrack}.
\subsection{Outline}
In \autoref{sec:02_prob_def}, we formulate the problems related to \ref{O.1} and \ref{O.2}. In \autoref{sec:03}, we introduce captivity-escape differential games, which we leverage in \autoref{sec:04_relation} to address the problems related to \ref{O.1} and \ref{O.2}. Solving captivity-escape games is addressed in \autoref{sec:05_solving}, and a numerical example in \autoref{sec:06_numerical_example} demonstrates the presented method.
\section{Problem Formulation}\label{sec:02_prob_def}
In this section, we formulate the problems related to addressing \ref{O.1} and \ref{O.2}. To this end, we formulate the relative system between a high-fidelity model and a low-fidelity model and define the \gls*{WTE} along with associated quantities.
\subsection{Formulation of the Relative System}\label{subsec:rel_dyn}
The tracking error between a high-fidelity model used for tracking and a low-fidelity model used for planning is given by the relative state vector $ \vx\oft \in \admissX \subseteq \Reals[\sizeX] $, $ \vx: \Reals \rightarrow \admissX $ with
\begin{equation}\label{eq:rel_sys}
    \vx\oft \coloneq \relXTrans\rBrack[{\vx[\denoteLf]\oft,\vx[\denoteHf]\oft}]\rBrack[{\matchX \vx[\denoteLf]\oft - \vx[\denoteHf]\oft}].
\end{equation}
The matrix $ \matchX $ is an embedding matrix from $ \admissX[\denoteLf] $ to $ \admissX[\denoteHf] $, and $ \relXTrans\rBrack[{\vx[\denoteLf]\oft,\vx[\denoteHf]\oft}] $ is a Lipschitz continuous map that must be determined such that the relative system dynamics
\begin{equation}\label{eq:dyn_rel}
    \vxdot\oft = \vf^{\vmp}\rBrack[{\vx\oft, \vu[\denoteLf]\oft, \vu[\denoteHf]\oft}]
\end{equation}
are given by a function $ \vf^{\vmp}: \admissX \times \admissU[\denoteLf] \times \admissU[\denoteHf] \rightarrow \admissX $ that is Lipschitz continuous in $ \vx\oft $ for fixed $ \vu[\denoteLf]\oft $ and $ \vu[\denoteHf]\oft $, and continuously differentiable in $ \vx\oft $, with $ \vmp = \begin{bmatrix} \vmp[\denoteLf] & \vmp[\denoteHf] \end{bmatrix} \in \Reals[ {\sizeMParams[\denoteLf] + \sizeMParams[\denoteHf]} ] $.\footnote{If mobile robots are considered, the map $ \relXTrans\rBrack[{\vx[\denoteLf]\oft,\vx[\denoteHf]\oft}] $ is often given by the identity map or a rotation map for various combinations of high-fidelity and low-fidelity models. A discussion on $ \relXTrans\rBrack[{\vx[\denoteLf]\oft,\vx[\denoteHf]\oft}] $ is given in \autocite{singh.2020}.}

Thus, given an initial state $ \vx\oftZero $ and any measurable input functions $ \vu[\denoteLf] $ and $ \vu[\denoteHf] $, a unique solution trajectory of \eqref{eq:dyn_rel} exists
\begin{equation}\label{eq:trajectory}
    \vSolTrj\rBrack[{\cdot;\stime[\denoteZero],\vx\oftZero,\vu[\denoteLf],\vu[\denoteHf]}]:[\stime[\denoteZero],\stimeEnd]\rightarrow\admissX,
\end{equation}
which satisfies $ \vSolTrj\rBrack[{\stime[\denoteZero];\stime[\denoteZero],\vx\oftZero,\vu[\denoteLf],\vu[\denoteHf]}] = \vx\oftZero $ and
\begin{equation}
    \begin{aligned}
        \dot{\vSolTrj} & \rBrack[{\stime;\stime[\denoteZero],\vx\oftZero,\vu[\denoteLf],\vu[\denoteHf]}] = \\ 
        & \vf^{\vmp}\rBrack[{\vSolTrj\rBrack[{\stime;\stime[\denoteZero],\vx\oftZero,\vu[\denoteLf],\vu[\denoteHf]}],\vu[\denoteLf]\oft,\vu[\denoteHf]\oft}]
    \end{aligned}
\end{equation}
almost everywhere on $ \stime \in [\stime[\denoteZero],\stimeEnd] $ \autocite{mitchell.2005}, \autocite{evans.1984}.
\subsection{Problems Related to Addressing \ref{O.1} and \ref{O.2}}
As addressing \ref{O.1} relates to determining the sets $ \admissX[\denoteLf] $, $ \admissU[\denoteLf] $, and the vector $ \vmp[\denoteLf] $, we assume that $ \admissX[\denoteLf] $, $ \admissU[\denoteLf] $, and $ \vmp[\denoteLf] $ can be determined by means of a single parameter.
\begin{assumption}\label{assump:vPB}
    The sets $ \admissX[\denoteLf] $ and $ \admissU[\denoteLf] $ can be parametrized such that each set is fully defined by a vector $ \vparametrParams[{\admissX[\denoteLf]}] $ and $ \vparametrParams[{\admissU[\denoteLf]}] $, respectively, containing the related parameters.
\end{assumption}
\begin{definition}\label{def:pp}
    Let Assumption~\ref{assump:vPB} hold. Consider $ \vPB \in \Reals[\sizePS] $, with $ \vPB = \begin{bmatrix} \vparametrParams[{\admissX[\denoteLf]}] & \vparametrParams[{\admissU[\denoteLf]}] & \vmp[\denoteLf] \end{bmatrix} $, which fully defines $ \admissX[\denoteLf] $, $ \admissU[\denoteLf] $, and $ \vmp[\denoteLf] $. The mapping $ \ppTOvPB\rBrack[\pp] = \vPB $, $ \ppTOvPB: \Reals[+]\rightarrow \Reals[\sizePS] $ determines $ \admissX[\denoteLf] $, $ \admissU[\denoteLf] $, and $ \vmp[\denoteLf] $ by means of the so-called planning performance $ \pp \in \Reals[+] $.
\end{definition}
\begin{assumption}
    The mapping $ \ppTOvPB\rBrack[\pp] = \vPB $ relates larger values of $ \pp $ to improved performance of the low-fidelity model. 
\end{assumption}

In addition, addressing \ref{O.1} and \ref{O.2} requires a formal relation between $ \pp $, a safety margin $ \ssafetyMargin $, and a safety controller.
\begin{definition}\label{def:fb_strategy}
    The mapping $ \vu[\denoteHf] = \vStrategyFB[\denoteHf]\rBrack[\vx] $, with $ \vStrategyFB[\denoteHf]: (\Reals \rightarrow \admissX) \rightarrow (\Reals \rightarrow \admissU[\denoteHf]) $, denotes a state-feedback strategy.
\end{definition}
\begin{definition}\label{def:na_strategy}
    The mapping $ \vu[\denoteLf] = \vStrategyNA[\denoteLf]\rBrack[{\vx,\vu[\denoteHf]}] $, with $ \vStrategyNA[\denoteLf]: (\Reals \rightarrow \admissX) \times (\Reals \rightarrow \admissU[\denoteHf]) \rightarrow (\Reals \rightarrow \admissU[\denoteLf]) $, denotes a nonanticipative strategy as formulated in \autocite{mitchell.2005}.
\end{definition}

For brevity, we use the notations $ \vStrategyFB[\denoteHf]\rBrack[\vx] $ and $ \vStrategyNA[\denoteLf]\rBrack[{\vx,\vu[\denoteHf]}] $ to refer to a strategy, and the notations $ \vStrategyFB[\denoteHf]\rBrack[\vx\oft] $ and $ \vStrategyNA[\denoteLf](\vx\oft,$ $\vu[\denoteHf]\oft) $ to refer to an input vector resulting from a strategy.
\begin{definition}\label{def:crit_state_vec}
    The critical relative state vector $ \vfselXcrit\rBrack[\vx\oft] \in \admissX $ captures relative states for which safety-critical constraints exist. It is given by the linear mapping $ \vfselXcrit\rBrack[\vx\oft] = \matrx\vx\oft $, $\vfselXcrit: \admissX \rightarrow \admissX $, with $ \matrx = \diag\rBrack[{\matrxEntry[\runJ]}] $, $ j=1\dots\sizeX $, and $ \matrxEntry[\runJ] = 1 $, if safety-critical constraints exist for $ \sx[\runJ] $, and $ \matrxEntry[\runJ] = 0 $, otherwise.
\end{definition}
\begin{definition}\label{def:teb}
    A \gls*{TEB} is a set $ \Steb \subseteq \admissX $ with a nonempty and compact $ \vfselXcrit\rBrack[\Steb] \subseteq \admissX $, and for which a state-feedback strategy $ \vu[\denoteHf] = \vStrategyFB[\denoteHf]\rBrack[\vx] $ exists such that for any nonanticipative strategy $ \vu[\denoteLf] = \vStrategyNA[\denoteLf]\rBrack[{\vx,\vu[\denoteHf]}] $ and any initial $ \vx\oftZero \in \Steb \Rightarrow \forall \stime \geq \stime[\denoteZero]: \vSolTrj(\stime; \stime[\denoteZero],\vx\oftZero,\vStrategyNA[\denoteLf]\rBrack[{\vx,\vu[\denoteHf]}],\vStrategyFB[\denoteHf]\rBrack[{\vx}]) \in\Steb $.
\end{definition}

The boundary of $ \Steb $ is denoted as $ \boundary\Steb $.
\glsreset{WTE}
\begin{definition}\label{def:wte}
    The \gls*{WTE} $ \wte $ is given by $ \wte \coloneq \max_{\vx\oft\in\Steb} \Euclidean[{\vfselXcrit\rBrack[\vx\oft]}] $.
\end{definition}

\begin{remark}\label{rem:smargin}
    To prevent safety-critical constraint violations under the \gls*{WTE} $ \wte $, a safety margin $ \ssafetyMargin $ must fulfill $ \wte \leq \ssafetyMargin $.
\end{remark}

With Remark~\ref{rem:smargin}, a \gls*{TEB} $ \Steb $ and the related \gls*{WTE} $ \wte $ establish the required formal relation between the planning performance $ \pp $ (i.e., $ \admissX[\denoteLf] $, $ \admissU[\denoteLf] $, and $ \vmp[\denoteLf] $, see Definition~\ref{def:pp}), a safety margin $ \ssafetyMargin $, and a safety controller. To minimize conservatism, the \gls*{TEB} $ \Steb $ must be constituted by means of the state-feedback strategy that is optimal w.r.t. retaining solution trajectories of \eqref{eq:dyn_rel} in $ \Steb $. This strategy is denoted as $ \vu[\denoteHf] = \vStrategyFB[\denoteHf]^\bullet\rBrack[\vx] $. Moreover, the strategy that is optimal w.r.t. forcing solution trajectories of \eqref{eq:dyn_rel} to leave the \gls*{TEB} $ \Steb $, denoted as $ \vu[\denoteLf] = \vStrategyNA[\denoteLf]^\bullet\rBrack[{\vx,\vu[\denoteHf]}] $, must be used to ensure the \gls*{TEB} $ \Steb $ holds for any $ \vu[\denoteLf] = \vStrategyNA[\denoteLf]\rBrack[{\vx,\vu[\denoteHf]}] $.
\begin{problem}\label{problem_1}
    Determine the optimal strategies $ \vu[\denoteHf] = \vStrategyFB[\denoteHf]^\bullet\rBrack[\vx] $ and $ \vu[\denoteLf] = \vStrategyNA[\denoteLf]^\bullet\rBrack[{\vx,\vu[\denoteHf]}] $ for constituting a \gls*{TEB} $ \Steb $.
\end{problem}

To maintain focus on minimizing conservatism, the planning performance $ \pp $ must be adapted such that the safety margin $ \ssafetyMargin $ is fully utilized by the related \gls*{WTE} $ \wte $, requiring $ \wte = \ssafetyMargin $.
\begin{problem}\label{problem_2}
    Determine the planning performance $ \pp $ so that the \gls*{TEB} constituted by means of $ \vu[\denoteHf] = \vStrategyFB[\denoteHf]^\bullet\rBrack[\vx] $ and $ \vu[\denoteLf] = \vStrategyNA[\denoteLf]^\bullet\rBrack[{\vx,\vu[\denoteHf]}] $ results in a \gls*{WTE} $ \wte $ that fulfills $ \wte = \ssafetyMargin $.
\end{problem}

\section{Captivity-Escape Differential Game}\label{sec:03}
In this section, we formulate a novel differential game that is tailored to efficiently address Problem~\ref{problem_1} and Problem~\ref{problem_2}.
\subsection{Formulation of a Captivity-Escape Game}\label{sec:03_game}
\begin{figure}[!t]
    \centering
    \begin{tikzpicture}[scale=0.9]
    \draw [draw=none] (0,0.2cm) rectangle (7cm,2.8cm);

    \node [black,anchor=center,font=\footnotesize] at (1.75cm,2.8cm) {Captivity-Escape Game};

    \draw [color=white, fill=black!10] (.2cm,.2cm) rectangle (3.3cm,2.6cm);
    \draw [color=mygreen, thick, fill=mygreenfill!20] (1.75cm,1.4cm) circle (1cm);
    \draw[black!80, very thin] (1.75cm,1.4cm) -- (\figGameDiffEndBetaX cm,\figGameDiffEndBetaY cm);
    \node [black,anchor=south east,font=\footnotesize] at (1.45cm,.7cm) {$ \Scap $};
    \node [black,anchor=north east,font=\footnotesize] at (1.45cm,1.8cm) {$ \sizeScap $};

    \draw [fill] (1.75cm,1.4cm) circle (.05cm);
    \node [black,anchor=south west,font=\footnotesize] at (1.75cm,1cm) {\identPP};
    
    \draw [fill] (2.15cm,1.75cm) circle (.05cm);
    \node [black,anchor=south west,font=\footnotesize] at (2.15cm,1.35cm) {\identPE};
    
    \draw[black!80, densely dashed] (3.5cm,3cm) -- (3.5cm,.2cm) ;

    \begin{scope}[shift={(3.5cm,0cm)}]
        \node [black,anchor=center,font=\footnotesize] at (1.75cm,2.8cm) {Pursuit-Evasion Game};
        \draw [color=white, fill=mygreenfill!20] (.2cm,.2cm) rectangle (3.3cm,2.6cm);
    
        \draw [color=black, thick,fill=black!10] (1.75cm,1.4cm) circle (1cm);
        \draw[black!80, very thin] (1.75cm,1.4cm) -- (\figGameDiffEndBetaX cm,\figGameDiffEndBetaY cm);
        \node [black,anchor=south east,font=\footnotesize] at (1.45cm,.7cm) {$ \tilde{\Scap} $};
        \node [black,anchor=north east,font=\footnotesize] at (1.45cm,1.8cm) {$ \sizeScap $};
    
        \draw [fill] (1.75cm,1.4cm) circle (.05cm);
        \node [black,anchor=south west,font=\footnotesize] at (1.75cm,1cm) {$ \tilde{\text{\identPP}} $};
    
        \draw [fill] (2.7cm,2.25cm) circle (.05cm);
        \node [black,anchor=south west,font=\footnotesize] at (2.7cm,1.85cm) {$ \tilde{\text{\identPE}} $};
    \end{scope}
\end{tikzpicture}
    \caption{A valid initial state $ \vx\oftZero $ for both games is depicted by the position of \identPE\ (respectively, $ \tilde{\text{\identPE}} $). Each game terminates when \identPE\ (respectively, $ \tilde{\text{\identPE}} $) exits the green region into the gray region.}
    \label{fig:two_games}
\end{figure}
A captivity-escape game is a deterministic zero-sum differential game with variable terminal time $ \stimeEnd[\denoteEnd] $. In the game, there are two players that we refer to as \identPE\ and \identPP. \identPE\ employs the dynamics of the low-fidelity model and is restricted to a nonanticipative strategy $ \vu[\denoteLf] = \vStrategyNA[\denoteLf]\rBrack[{\vx,\vu[\denoteHf]}] $. \identPP\ employs the dynamics of the high-fidelity model and is restricted to a state-feedback strategy $ \vu[\denoteHf] = \vStrategyFB[\denoteHf]\rBrack[\vx] $. The relative state vector between \identPE\ and \identPP\ is given by \eqref{eq:rel_sys}, and the relative system dynamics between \identPE\ and \identPP\ are given by \eqref{eq:dyn_rel}. The initial state of the game is denoted as $ \vx\oftZero $.
\begin{definition}\label{def:scap}
    The so-called captivity set $ \Scap \subseteq \admissX $ is given by $ \Scap \coloneq \cBrack[{\vx\oft \in \admissX \vertM \Euclidean[{\vfselXcrit\rBrack[{\vx\oft}]}] \leq \sizeScap}] $, with $ \sizeScap \in \Reals[+] $.
\end{definition}

The boundary of the captivity set $ \Scap $ is denoted as $ \boundary\Scap $.
The objective function $ \objectiveFunc $ in a captivity-escape game is given by its terminal time $ \stimeEnd[\denoteEnd] $, $ \objectiveFunc = \stimeEnd[\denoteEnd] \coloneq \inf\cBrack[{\stime \vertM \vx\oft \not\in \Scap }] $. \identPP\ aims to maximize $ \objectiveFunc $, whereas \identPE\ aims to minimize $ \objectiveFunc $.
\identPE\ initializes in \textit{captivity}, and thus, the initial state fulfills $ \vx\oftZero \in \Scap $. In the game, \identPE\ seeks to \textit{escape} from $ \Scap $ by achieving any $ \vx\oft \notin \Scap $. Conversely, \identPP\ strives to retain \identPE\ in captivity by maintaining $ \vx\oft \in \Scap $. The game terminates when \identPE\ escapes $ \vx\oft \notin \Scap $.

Note the difference between a captivity-escape game and a pursuit-evasion game as employed in \autocite{herbert.2017}, \autocite{chen.2021}. In a pursuit-evasion game, a player $ \tilde{\text{\identPE}} $ is initially \textit{free}, and the game terminates when a second player $ \tilde{\text{\identPP}} $ \textit{captures} $ \tilde{\text{\identPE}} $ inside a closed target set $ \tilde{\Scap} $ {\autocite{patsko.2018}}. In contrast, in the formulated captivity-escape game, \identPE\ is initially in \textit{captivity}, and the game terminates when \identPE\ \textit{escapes} from the compact captivity set $ \Scap $. This shift in perspective, highlighted in \autoref{fig:two_games}, is central for efficiently addressing \ref{O.0}, \ref{O.1}, and \ref{O.2}.
\subsection{Formulation of a Captivity-Escape Game of Kind}\label{sec:cegok}
For solving Problem~\ref{problem_1} and Problem~\ref{problem_2}, it is sufficient to investigate under what conditions \identPP\ can prevent termination of the game. To this end, consider a captivity-escape game of kind with only two possible outcomes:
\begin{enumerate}[{o.}i]
    \item Player \identPP\ is able to retain eternal captivity despite \identPE's best effort to terminate the game by escaping.\label{out:captivity}
    \item Player \identPE\ is able to escape in finite time despite \identPP's best effort to retain \identPE\ in captivity.\label{out:escape}
\end{enumerate}

The objective function $ \objectiveFunc_\mathrm{k} $ in a captivity-escape game of kind can only take two values and is given by
\begin{equation}\label{eq:qual_obj}
    \objectiveFunc_\mathrm{k} \coloneq 
    \left\{
        \begin{array}{ll}
            +1 & \text{if } \vx\oft \in \Scap \text{, } \forall\stime\in [\stime[\denoteZero],\infty),\\
            -1 & \text{otherwise.}
        \end{array}
        \right.
\end{equation}
\identPP\ aims to maximize $ \objectiveFunc_\mathrm{k} $, whereas \identPE\ aims to minimize $ \objectiveFunc_\mathrm{k} $.
The related value function $ \valueFunc\rBrack[\vx\oft] $ for this game is given by
\begin{equation}\label{eq:qual_val}
    \valueFunc\rBrack[\vx\oft] \mkern-2mu=\mkern-2mu 
    \left\{
        \begin{array}{ll}
            \mkern-9mu +1 & \mkern-9mu \text{if optimal play from } \mkern-2mu\vx\oft\mkern-2mu \text{ yields o.\ref{out:captivity}},\\
            \mkern-9mu -1 & \mkern-9mu \text{if optimal play from } \mkern-2mu\vx\oft\mkern-2mu \text{ yields o.\ref{out:escape}}.
        \end{array}
        \right.
\end{equation}
\begin{definition}\label{def:captivity_zone}
    The set $ \Szcap \subseteq \Scap $, with $ \Szcap \coloneq \{\vx\oft \in\Scap \vertM $\\
    \noindent $ \valueFunc\rBrack[\vx\oft] = +1 \} $, forms the so-called captivity zone of $ \Scap $.
\end{definition}
\begin{definition}\label{def:escape_zone}
    The set $ \Szesc \subseteq \Scap $, with $ \Szesc \coloneq \{\vx\oft \in\Scap \vertM $\\
    \noindent $ \valueFunc\rBrack[\vx\oft] = -1 \} $, forms the inevitable escape zone of $ \Scap $, from which escape from $ \Scap $ occurs in finite time under optimal play.
\end{definition}
\begin{assumption}\label{assump:captivity_zone_exists}
    The captivity zone $ \Szcap $ of a captivity-escape game of kind is nonempty and compact.
\end{assumption}

The boundary of $ \Szcap $ is denoted as $ \boundary\Szcap $.
\begin{definition}\label{def:opt_strategies_game_of_kind}
    In a captivity-escape game of kind, the optimal strategies of \identPE\ and \identPP\ are denoted as $ \vu[\denoteLf] = \vStrategyNA[\denoteLf]^{\denoteGopt}\rBrack[{\vx,\vu[\denoteHf]}] $ and $ \vu[\denoteHf] = \vStrategyFB[\denoteHf]^{\denoteGopt}\rBrack[\vx] $, respectively. These optimal strategies are only defined for $ \vx\oft \in \boundary\Scap $ and $ \vx\oft \in \boundary\Szcap $.
\end{definition}

Consequently, if both players use their optimal strategy, trajectories starting in the captivity zone $ \Szcap $ stay in the captivity set $ \Scap $: $ \forall\vx\oftZero \in \Szcap \Rightarrow \forall \stime \geq \stime[\denoteZero] $: $ \vSolTrj(\stime;\stime[\denoteZero],\vx\oftZero,$
$\vStrategyNA[\denoteLf]^{\denoteGopt}\rBrack[{\vx,\vu[\denoteHf]}],\vStrategyFB[\denoteHf]^{\denoteGopt}\rBrack[{\vx}]) \in \Scap $. Thus, we say initializing the game in $ \vx\oftZero\in\Szcap $ results in eternal captivity.

\section{From a Captivity-Escape Game of Kind to a \glsentryshort{TEB}, the Related \glsentryshort{WTE}, and a Safety Controller}\label{sec:04_relation}
In this section, we formulate the relation between a \gls*{TEB} $ \Steb $, a \gls*{WTE} $ \wte $, a safety controller, and a captivity-escape game of kind as formulated in \autoref{sec:cegok}.
\subsection{Captivity-Escape Game and a \glsentryshort{TEB}}\label{sec:EC_TEB}
\begin{remark}
    Open-loop trajectories of \eqref{eq:dyn_rel} are understood in the Carathéodory sense; due to possible discontinuities in $ \vStrategyFB[\denoteHf]\rBrack[{\vx}] $ and $ \vStrategyNA[\denoteLf]\rBrack[{\vx,\vu[\denoteHf]}] $, closed-loop trajectories $ \vSolTrj(\cdot;\stime[\denoteZero],\vx\oftZero,$
    $\vStrategyNA[\denoteLf](\vx,\vu[\denoteHf]),\vStrategyFB[\denoteHf]\rBrack[{\vx}]) $ are understood in the Filippov sense \autocite{cortes.2008}.
\end{remark}
\begin{lemma}\label{lemma:1}
    Let $ \vSolTrj(\cdot;\stime[\denoteZero],\vx\oftZero,\vStrategyNA[\denoteLf](\vx,\vu[\denoteHf]),\vStrategyFB[\denoteHf]\rBrack[{\vx}]) $ be a closed-loop solution of \eqref{eq:dyn_rel} on $ [\stime[\denoteZero],\stimeEnd] $, and let 
    $ \tilde{\vx} = \vSolTrj(\tilde{\stime};\stime[\denoteZero],\vx\oftZero,\vStrategyNA[\denoteLf](\vx,$
    $\vu[\denoteHf]),\vStrategyFB[\denoteHf]\rBrack[{\vx}]) $, $ \tilde{\stime} \in $ $ [\stime[\denoteZero],\stimeEnd] $. The restriction $ \tilde{\vSolTrj}(\cdot;\tilde{\stime},\tilde{\vx},\vStrategyNA[\denoteLf](\vx,\vu[\denoteHf]),$
    $\vStrategyFB[\denoteHf]\rBrack[{\vx}]): $ $ [\tilde{\stime},\stimeEnd] \rightarrow \admissX $, given by 
    $ \tilde{\vSolTrj}(\stime;\tilde{\stime},\tilde{\vx},\vStrategyNA[\denoteLf] \rBrack[{\vx,\vu[\denoteHf]}],\vStrategyFB[\denoteHf] \rBrack[{\vx}]) = $
    $ \vSolTrj(\stime;\stime[\denoteZero],\vx\oftZero,\vStrategyNA[\denoteLf]\rBrack[{\vx,\vu[\denoteHf]}],\vStrategyFB[\denoteHf]\rBrack[{\vx}]) $, is again a solution on $ [\tilde{\stime},\stimeEnd] $.
\end{lemma}
\begin{proof}
    Since $ \vSolTrj(\cdot;\stime[\denoteZero],\vx\oftZero,\vStrategyNA[\denoteLf](\vx,\vu[\denoteHf]),\vStrategyFB[\denoteHf]\rBrack[{\vx}]) $ is a closed-loop solution on $ [\stime[\denoteZero],\stimeEnd] $, it is absolutely continuous and satisfies \eqref{eq:dyn_rel} a.e. on $ [\stime[\denoteZero],\stimeEnd] $, and hence on any $ [\tilde{\stime},\stimeEnd] \subseteq [\stime[\denoteZero],\stimeEnd] $.
\end{proof}
\begin{theorem}\label{theo:TEB}
    Let Assumption~\ref{assump:captivity_zone_exists} hold, and for $ \vx\oft \notin \boundary\Scap \cup \boundary\Szcap $, let $ \vStrategyNA[\denoteLf]^{\denoteGopt}\rBrack[{\vx,\vu[\denoteHf]}] $ and $ \vStrategyFB[\denoteHf]^{\denoteGopt}\rBrack[\vx] $ return any $ \vu[\denoteLf]\oft \in \admissU[\denoteLf] $ and $ \vu[\denoteHf]\oft \in \admissU[\denoteHf] $, respectively.
    The captivity zone $ \Szcap $ is robust positively invariant, thus forming a \gls*{TEB} $ \Steb $ constituted by means of $ \vu[\denoteHf] = \vStrategyFB[\denoteHf]^{\denoteGopt}\rBrack[\vx] $, i.e., for any $ \vx\oftZero \in \Szcap $, $ \vSolTrj(\stime;\stime[\denoteZero], \vx\oftZero,$
    $ \vStrategyNA[\denoteLf]^{\denoteGopt}\rBrack[{\vx,\vu[\denoteHf]}],\vStrategyFB[\denoteHf]^{\denoteGopt}\rBrack[{\vx}]) \in \Szcap $, $ \forall \stime \in [\stime[\denoteZero],\stimeEnd] $.
\end{theorem}
\begin{proof}
    Let $ \vx\oftZero \in \Szcap $ and suppose, for contradiction, that there exists $ \tilde{\stime} \in [\stime[\denoteZero],\stimeEnd] $ with $ \tilde{\vx} = \vSolTrj(\tilde{\stime};\stime[\denoteZero],\vx\oftZero,\vStrategyNA[\denoteLf]^{\denoteGopt}(\vx,\vu[\denoteHf]),$ 
    $\vStrategyFB[\denoteHf]^{\denoteGopt}\rBrack[{\vx}]) \in \Szesc $. By Definition~\ref{def:escape_zone}, the restricted trajectory $ \tilde{\vSolTrj}(\cdot;\tilde{\stime},\tilde{\vx},\vStrategyNA[\denoteLf]^{\denoteGopt}\rBrack[{\vx,\vu[\denoteHf]}],\vStrategyFB[\denoteHf]^{\denoteGopt}\rBrack[{\vx}]) $, which is a closed-loop solution of \eqref{eq:dyn_rel} by Lemma~\ref{lemma:1}, implies termination of the game by escape, which contradicts $ \valueFunc\rBrack[\vx\oftZero] = +1 $ (see \eqref{eq:qual_val}). Thus, $ \vSolTrj(\stime;\stime[\denoteZero], \vx\oftZero, \vStrategyNA[\denoteLf]^{\denoteGopt}\rBrack[{\vx,\vu[\denoteHf]}],\vStrategyFB[\denoteHf]^{\denoteGopt}\rBrack[{\vx}]) \in \Szcap $, $ \forall \stime \in [\stime[\denoteZero],\stimeEnd] $.
\end{proof}
\subsection{Captivity-Escape Game and a \glsentryshort{WTE}}
\begin{theorem}\label{theo:zeta_beta}
    Let Assumption~\ref{assump:captivity_zone_exists} hold. The size $ \sizeScap $ of the captivity set $ \Scap $ is an upper bound of the \gls*{WTE} $ \wte $ related to the \gls*{TEB} $ \Steb $ formed by $ \Szcap $ (see Theorem~\ref{theo:TEB}): $ \wte \leq \sizeScap $.
\end{theorem}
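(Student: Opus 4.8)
The plan is to prove the bound by a direct set-containment argument, because Theorem~\ref{theo:TEB} has already done the substantive work of identifying the \gls*{TEB} with the captivity zone. First I would invoke Theorem~\ref{theo:TEB} to set $\Steb = \Zcap$ in Definition~\ref{def:wte}, so that the quantity to be bounded becomes $\sWTE = \max_{\vx\oft\in\Zcap}\Euclidean[{\vfselX\rBrack[\vx\oft]}]$.

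Next I would confirm that this maximum is actually attained, so that $\sWTE$ is well defined. Assumption~\ref{assump:captivity_zone_exists} guarantees that $\Zcap$ is nonempty and compact, and the critical-state map $\vfselX\rBrack[\vx\oft] = \bm{P}\vx\oft$ from Definition~\ref{def:crit_state_vec} is linear and hence continuous; composing it with the Euclidean norm yields a continuous function, which attains its supremum on the compact set $\Zcap$. This is exactly the role played by the compactness portion of Assumption~\ref{assump:captivity_zone_exists} and by the linearity of $\vfselX$.

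The core step is then the inclusion $\Zcap \subseteq \Scap$ from Definition~\ref{def:captivity_zone}, combined with the definition of the captivity set. Every $\vx\oft \in \Zcap$ also lies in $\Scap$, so by Definition~\ref{def:scap} it satisfies $\Euclidean[{\vfselX\rBrack[\vx\oft]}] \leq \sSizeScap$. Since this bound holds pointwise on $\Zcap$, taking the maximum over $\vx\oft \in \Zcap$ preserves the inequality and gives $\sWTE \leq \sSizeScap$, which is the claim.

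I do not anticipate a genuine obstacle here: the statement is essentially a monotonicity consequence of the chain $\Steb = \Zcap \subseteq \Scap$ together with the defining constraint of $\Scap$. The only point requiring care is the existence of the maximum in Definition~\ref{def:wte}, and both ingredients needed for it are already supplied by the hypotheses. It is worth noting that the resulting bound need not be tight: equality $\sWTE = \sSizeScap$ would require a maximizer of $\Euclidean[{\vfselX\rBrack[\cdot]}]$ on $\Zcap$ to lie on the boundary $\bScap$, and closing precisely this gap is what Problem~\ref{problem_2} later targets by adapting the planning performance $\pp$.
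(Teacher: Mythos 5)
Your proof is correct and rests on the same core fact as the paper's: $\Zcap \subseteq \Scap$ together with Definition~\ref{def:scap} bounds $\Euclidean[{\vfselX\rBrack[\vx\oft]}]$ by $\sSizeScap$ on the \gls*{TEB}, so the maximum cannot exceed $\sSizeScap$. Your pointwise-then-maximize phrasing is a slightly more direct rendering of the paper's argument (which instead bounds the \gls*{WTE} by that of the ``largest possible'' \gls*{TEB}, namely $\Scap$ itself), and your explicit check that the maximum is attained via compactness and continuity is a welcome detail the paper leaves implicit.
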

\begin{proof}
    The largest possible $ \hat{\wte} $ corresponds to the largest possible $ \hat{\Steb} $, which is given by the largest possible $ \Szcapmax $. As $ \Szcap \subseteq \Scap $, $ \hat{\Steb} = \Scap $ and
    $ \hat{\wte} = \max_{\vx\oft \in \hat{\Steb}} \Euclidean[{\vfselXcrit\rBrack[\vx\oft]}] = \max_{\vx\oft \in \Scap} \Euclidean[{\vfselXcrit\rBrack[\vx\oft]}] = \sizeScap $ (see Definition~\ref{def:wte}).
\end{proof}
\begin{corollary}\label{cor:alpha_beta}
    The size $ \sizeScap $ of the captivity set $ \Scap $ serves as a safety margin $ \ssafetyMargin $: $ \wte \leq \sizeScap = \ssafetyMargin $ (see Remark~\ref{rem:smargin}).
\end{corollary}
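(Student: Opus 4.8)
The plan is to obtain this corollary by directly combining the upper bound of Theorem~\ref{theo:zeta_beta} with the admissibility condition for a safety margin stated in Remark~\ref{rem:smargin}. Theorem~\ref{theo:zeta_beta} provides, for the \gls*{TEB} $\Steb$ formed by the captivity zone $\Zcap$ (see Theorem~\ref{theo:TEB}), the inequality $\sWTE \leq \sSizeScap$. Remark~\ref{rem:smargin} states that any $\smargin$ satisfying $\sWTE \leq \smargin$ prevents safety-critical constraint violations under the \gls*{WTE} and thus qualifies as a valid safety margin.

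First I would make the identification $\smargin = \sSizeScap$. Substituting this choice into the admissibility condition of Remark~\ref{rem:smargin} reduces the required inequality $\sWTE \leq \smargin$ to $\sWTE \leq \sSizeScap$, which is exactly the statement of Theorem~\ref{theo:zeta_beta}. Chaining the two relations yields $\sWTE \leq \sSizeScap = \smargin$, so $\sSizeScap$ indeed serves as a safety margin in the sense of Remark~\ref{rem:smargin}.

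I do not expect a genuine obstacle here, as this is essentially a bookkeeping step that repackages the bound of Theorem~\ref{theo:zeta_beta} in the safety-margin terminology of Remark~\ref{rem:smargin}. The only point requiring care is ensuring that the \gls*{WTE} appearing in both statements refers to the same configuration---namely the \gls*{WTE} of the \gls*{TEB} $\Steb$ constituted by $\Zcap$ via $\vu\indhf = \vCtrlLaw\gopt\indhf\rBrack[\vx]$---so that the margin bounds precisely the tracking error the motion planner must account for.
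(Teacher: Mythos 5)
Your proposal is correct and matches the paper's intent exactly: the corollary is stated as an immediate consequence of Theorem~\ref{theo:zeta_beta} combined with the admissibility condition $\sWTE \leq \smargin$ from Remark~\ref{rem:smargin}, which is precisely the chaining $\sWTE \leq \sSizeScap = \smargin$ you describe. The paper gives no separate proof because it treats this as the same bookkeeping step you identify, including the point that the \gls*{WTE} in question is the one associated with the \gls*{TEB} formed by $\Zcap$.
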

\begin{assumption}\label{assump:capZcontainsBoundary}
    The captivity zone $ \Szcap $ contains at least one state on the boundary of the captivity set $ \boundary\Scap $: $ \Szcap \cap \boundary\Scap \neq \emptyset $.
\end{assumption}
\begin{corollary}\label{cor:zeta_beta}
    Let Assumption~\ref{assump:captivity_zone_exists} and Assumption~\ref{assump:capZcontainsBoundary} hold: $ \wte = \sizeScap = \ssafetyMargin $ holds (see Theorem~\ref{theo:zeta_beta}, Corollary~\ref{cor:alpha_beta}).
\end{corollary}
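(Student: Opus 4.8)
The plan is to prove the claimed equalities by a squeeze argument, since Theorem~\ref{theo:zeta_beta} already supplies the upper bound $ \sWTE \leq \sSizeScap $ and Corollary~\ref{cor:alpha_beta} already identifies $ \sSizeScap = \smargin $. What remains is to establish the matching lower bound $ \sWTE \geq \sSizeScap $; the whole chain $ \sWTE = \sSizeScap = \smargin $ then follows by combining these facts.

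First, I would make explicit that by Theorem~\ref{theo:TEB} the \gls*{TEB} $ \Steb $ is formed by the captivity zone $ \Zcap $, so that the maximization defining the \gls*{WTE} in Definition~\ref{def:wte} runs over $ \Zcap $, i.e. $ \sWTE = \max_{\vx\oft \in \Zcap} \Euclidean[{\vfselX\rBrack[\vx\oft]}] $. Because $ \Zcap $ is compact by Assumption~\ref{assump:captivity_zone_exists} and $ \vfselX\rBrack[\vx\oft] = \bm{P}\vx\oft $ is linear, hence continuous (see Definition~\ref{def:crit_state_vec}), this maximum is attained; this also confirms that writing a maximum rather than a supremum in Definition~\ref{def:wte} is justified in the present setting.

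Second, for the lower bound I would invoke Assumption~\ref{assump:capZcontainsBoundary} to select a state $ \hat{\vx}\oft \in \Zcap \cap \bScap $. Since $ \hat{\vx}\oft \in \bScap $, Remark~\ref{rem:sphere_errmeas} yields $ \Euclidean[{\vfselX\rBrack[\hat{\vx}\oft]}] = \sSizeScap $. As $ \hat{\vx}\oft \in \Zcap $ is a feasible point of the maximization, it follows that $ \sWTE \geq \Euclidean[{\vfselX\rBrack[\hat{\vx}\oft]}] = \sSizeScap $. Combining this with $ \sWTE \leq \sSizeScap $ from Theorem~\ref{theo:zeta_beta} gives $ \sWTE = \sSizeScap $, and substituting $ \sSizeScap = \smargin $ from Corollary~\ref{cor:alpha_beta} closes the argument.

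There is no genuinely hard step here; the proof is a short reconciliation of already-established bounds. The only point worth emphasizing is the precise role of Assumption~\ref{assump:capZcontainsBoundary}: it is exactly the condition certifying that the captivity zone reaches the outer shell $ \bScap $, where the critical-state norm attains its maximal value $ \sSizeScap $. Without it, $ \Zcap $ could be a strict inner subset of $ \Scap $ on which $ \Euclidean[{\vfselX\rBrack[\vx\oft]}] < \sSizeScap $ everywhere, leaving only the conservative strict inequality $ \sWTE < \sSizeScap $. I would therefore frame this assumption as the tightness condition that forces the safety margin to be fully utilized, matching the stated goal $ \sWTE = \smargin $ of Problem~\ref{problem_2}.
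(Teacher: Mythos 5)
Your proposal is correct and follows essentially the same route the paper intends: the paper states this corollary without a written-out proof, relying on the upper bound $ \sWTE \leq \sSizeScap = \smargin $ from Theorem~\ref{theo:zeta_beta} and Corollary~\ref{cor:alpha_beta} together with Assumption~\ref{assump:capZcontainsBoundary} and Remark~\ref{rem:sphere_errmeas} supplying a point of $ \Zcap $ on $ \bScap $ that forces the matching lower bound. Your explicit squeeze argument, including the remark that compactness of $ \Zcap $ and continuity of $ \vfselX $ justify the maximum in Definition~\ref{def:wte}, is a faithful and slightly more careful rendering of that same reasoning.
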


By Corollary~\ref{cor:zeta_beta}, a captivity-escape game of kind can be used to solve Problem~\ref{problem_2} with $ \wte = \ssafetyMargin $ under Assumptions~\ref{assump:captivity_zone_exists} and \ref{assump:capZcontainsBoundary}.
\subsection{Captivity-Escape Game and a Safety Controller}
\begin{theorem}\label{theo:safety_controller}
    Let Assumption~\ref{assump:captivity_zone_exists} hold. The strategy
    \begin{equation}
        \vSCtrlLaw\rBrack[\vx\oft] \coloneq
        \left\{
        \begin{array}{ll}
            \vStrategyFB[\denoteHf]^{\denoteGopt}\rBrack[\vx\oft] & \text{if } \vx\oft \in \boundary\Steb,\\
            \text{any }\vu[\denoteHf]\oft \in \admissU[\denoteHf] & \text{if } \vx\oft \in \mathrm{int}(\Steb), 
        \end{array}
        \right. 
    \end{equation}
    ensures that for any $ \vx\oftZero \in \Steb $ and any $ \vu[\denoteLf] = \vStrategyNA[\denoteLf]\rBrack[{\vx,\vu[\denoteHf]}] $: $ \vSolTrj(\stime;\stime[\denoteZero],\vx\oftZero,\vStrategyNA[\denoteLf]\rBrack[{\vx,\vu[\denoteHf]}],$ $\vSCtrlLaw\rBrack[\vx]) \in \Steb $ holds $ \forall \stime \geq \stime[\denoteZero] $.
\end{theorem}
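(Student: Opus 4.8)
The plan is to prove invariance of the closed set $\Steb$ by a first-exit-time argument that exploits the fact that the safety controller $\vSCtrlLaw$ coincides with the optimal strategy $\vCtrlLaw\gopt\indhf$ of \identPP\ exactly where the latter is defined. First I would collect the structural facts: by Theorem~\ref{theo:TEB} the set $\Steb$ equals the captivity zone $\Zcap$, which is compact by Assumption~\ref{assump:captivity_zone_exists} and hence closed, with $\bSteb = \bZcap$. By Definition~\ref{def:opt_strategies_game_of_kind}, $\vCtrlLaw\gopt\indhf$ is defined only on $\bZcap$; accordingly, on $\bSteb$ the safety controller reproduces $\vCtrlLaw\gopt\indhf$, whereas on $\mathrm{int}(\Steb)$ it is left free — precisely the latitude the optimal strategy permits in the interior. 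Since the relative dynamics \eqref{eq:dyn_rel} admit unique, continuous solution trajectories, a trajectory can leave the closed set $\Steb$ only by first reaching its boundary $\bSteb$.

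Next I would argue by contradiction. Suppose that for some $\vx\oftZero \in \Steb$ and some nonanticipative strategy $\vu\indlf = \vStrategyNA\indlf\rBrack[\vx,\vu\indhf]$ the closed-loop trajectory leaves $\Steb$. Let $\tilde{\stime} \coloneq \inf\cBrack[{\stime > \stime\indZero \bvert \vSolTrj(\stime) \notin \Steb}]$ be the first exit time, where $\vSolTrj(\stime)$ abbreviates the trajectory $\vSolTrj(\stime;\stime\indZero,\vx\oftZero,\vStrategyNA\indlf\rBrack[\vx,\vu\indhf],\vSCtrlLaw\rBrack[\vx])$ of the statement. By continuity of $\vSolTrj$ and closedness of $\Steb$, the intermediate state $\tilde{\vx} \coloneq \vSolTrj(\tilde{\stime})$ lies on $\bSteb = \bZcap$, while $\vSolTrj(\stime) \notin \Steb$ along a sequence of times decreasing to $\tilde{\stime}$.

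Then I would derive the contradiction at $\tilde{\vx}$. Since $\tilde{\vx} \in \bZcap$, the safety controller applies $\vCtrlLaw\gopt\indhf$ there, and by Proposition~\ref{prop:trj_same} the portion of $\vSolTrj$ for $\stime \geq \tilde{\stime}$ is itself a solution trajectory issuing from $\tilde{\vx} \in \Zcap$ under the same strategy $\vStrategyNA\indlf\rBrack[\vx,\vu\indhf]$ of \identPE. At $\tilde{\stime}$ this trajectory therefore shares its state and its velocity with the trajectory that is governed by $\vCtrlLaw\gopt\indhf$ throughout and that, by Theorem~\ref{theo:TEB}, remains in $\Steb = \Zcap$; in particular the latter does not cross $\bSteb$ outward at $\tilde{\stime}$. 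Consequently the safety-controlled trajectory cannot leave $\Steb$ immediately after $\tilde{\stime}$ either, contradicting the definition of $\tilde{\stime}$. Hence no exit time exists and $\vSolTrj(\stime) \in \Steb$ holds for all $\stime \geq \stime\indZero$.

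The main obstacle is the hybrid nature of the closed loop, in which $\vSCtrlLaw$ switches between $\vCtrlLaw\gopt\indhf$ on $\bSteb$ and an arbitrary $\vu\indhf$ on $\mathrm{int}(\Steb)$: I must secure well-posedness (existence and continuity) of the switched trajectory and, more delicately, justify the last implication of the previous paragraph, namely that matching the optimal play only on the boundary is enough to prevent an outward crossing. The resolution rests on the semipermeability of $\bZcap$ encoded in Definition~\ref{def:captivity_zone} and Definition~\ref{def:opt_strategies_game_of_kind}: the optimal game-of-kind strategy is itself active only on $\bZcap$, where, for any admissible $\vStrategyNA\indlf$, $\vCtrlLaw\gopt\indhf$ keeps the closed-loop velocity from pointing out of $\Scap$; any outward crossing of $\Steb$ must pass through $\bSteb$, so the freedom granted on $\mathrm{int}(\Steb)$ is immaterial to the invariance of $\Steb$.
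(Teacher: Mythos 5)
Your proposal is correct and follows essentially the same route as the paper's proof: reduce invariance of $\Steb$ to behavior on $\bSteb$ via continuity of trajectories and Proposition~\ref{prop:trj_same}, then invoke Theorem~\ref{theo:TEB} together with the optimality of $\vCtrlLaw\gopt\indhf$ on $\bZcap$ (Definition~\ref{def:opt_strategies_game_of_kind}) to rule out an outward crossing. Your first-exit-time contradiction is merely a more explicit packaging of the paper's ``must pass a state in $\bSteb$'' step, and your closing remarks usefully flag the semipermeability subtlety that the paper's one-line justification leaves implicit.
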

\begin{proof}
    To leave $ \Steb $, $ \vSolTrj(\stime;\stime[\denoteZero],\vx\oftZero,\vStrategyNA[\denoteLf]\rBrack[{\vx,\vu[\denoteHf]}],\vSCtrlLaw\rBrack[\vx]) $ with $ \vx\oftZero \in \Steb $ must pass a state $ \vx\oft \in \boundary\Steb $. Thus, it is sufficient to show that $ \vSCtrlLaw\rBrack[\vx] $ ensures for any $ \vx\oftZero \in \boundary\Steb $ and any $ \vu[\denoteLf] = \vStrategyNA[\denoteLf]\rBrack[{\vx,\vu[\denoteHf]}] $: $ \vSolTrj(\stime;\stime[\denoteZero],\vx\oftZero,\vStrategyNA[\denoteLf]\rBrack[{\vx,\vu[\denoteHf]}],\vSCtrlLaw\rBrack[\vx]) \in \Steb $ holds $ \forall \stime \geq \stime[\denoteZero] $. This holds by Definition~\ref{def:opt_strategies_game_of_kind} and Theorem~\ref{theo:TEB}, as any $ \vStrategyNA[\denoteLf]\rBrack[{\vx,\vu[\denoteHf]}] $ is equally optimal or suboptimal compared to $ \vStrategyNA[\denoteLf]^{\denoteGopt}\rBrack[{\vx,\vu[\denoteHf]}] $ w.r.t. forcing a trajectory to leave $ \Steb $.
\end{proof}
\begin{corollary}\label{cor:safety_controller}
    The \gls*{TEB} $ \Steb $ is robust positively invariant for \eqref{eq:dyn_rel} with the minimal-intervention safety controller $ \vu[\denoteHf] = \vSCtrlLaw\rBrack[\vx] $.
\end{corollary}
\begin{corollary}
    The strategies $ \vStrategyNA[\denoteLf]^\bullet\rBrack[{\vx,\vu[\denoteHf]}] $ and $ \vStrategyFB[\denoteHf]^\bullet\rBrack[\vx] $ in Problem~\ref{problem_1} are given by $ \vStrategyNA[\denoteLf]^{\denoteGopt}\rBrack[{\vx,\vu[\denoteHf]}] $ and $ \vStrategyFB[\denoteHf]^{\denoteGopt}\rBrack[\vx] $, for $ \vx\oft \in \boundary\Steb $.
\end{corollary}
\section{Addressing \ref{O.0}, \ref{O.1}, and \ref{O.2} by Means of a Captivity-Escape Game of Kind}\label{sec:05_solving}
In this section, we present a method that determines a \gls*{TEB} $ \Steb $ through a captivity-escape game of kind. To this end, we outline an existing approach for addressing zero-sum differential games of kind in \autoref{sec:preliminaries} and complement this approach in \autoref{sec:analyzing_cegc} to address \ref{O.0}, \ref{O.1}, and \ref{O.2}.
\subsection{Addressing a Captivity-Escape Game of Kind with a Given Planning Performance and Safety Bound}\label{sec:preliminaries}
In this section, we outline the approach for addressing games of kind initially presented in \autocite[Chap. 8]{isaacs.1999} and further investigated in \autocite{patsko.2018}, \autocite{buzikov.2023}. This approach determines a compact invariant region $ \SCRegion \subseteq \Szcap $ by constructing the boundary $ \boundary\SCRegion $ using the so-called \gls*{IP} of $ \boundary\Scap $ and a closed barrier $ \Sbarr $. While the existing approach aims to determine $ \SCRegion = \Szcap $, this equality cannot be ensured in general. Nevertheless, the resulting region $ \SCRegion $ is guaranteed to retain the properties of $ \Szcap $ that are relevant for ensuring safety. Within this section, assume a suitable planning performance $ \pp $ and safety margin $ \ssafetyMargin $ to be given that ensure the existence of $ \Szcap $.
\subsubsection{The Inward-Facing Part of $ \boundary\Scap $}
Let $ \vonScap \in \Reals[\sizeX] $ denote the outward normal to the captivity set $ \Scap $ at a state $ \vx\oft \in \boundary\Scap $. If \identPP\ acts optimally, \identPE\ cannot terminate the game immediately by escaping from a state $ \vx\oft \in \boundary\Scap $ where
\begin{equation}\label{eq:def_sp}
    \min\limits_{\vu[\denoteHf]\oft\in\admissU[\denoteHf]}\max\limits_{\vu[\denoteLf]\oft\in\admissU[\denoteLf]}\vonScap^{\Transpose} \reldyn \leq 0.
\end{equation}
\glsreset{IP}
\begin{definition}\label{def:sp}
    The states $ \vx\oft \in \boundary\Scap $ satisfying \eqref{eq:def_sp} constitute the so-called \gls*{IP} of $ \boundary\Scap $.\footnote{Game-theoretic terminology refers to these states as the \textit{nonusable part} of $ \partial\Scap $, since optimal play by both players does not lead to immediate termination of the game (see \autocite[{Sec. 4.7}]{isaacs.1999}). However, precisely because the game does not terminate immediately, this part of $ \partial\Scap $ is \textit{usable} for constructing the captivity zone $ \Szcap $. To avoid potential misinterpretation, we therefore use the neutral term \textit{inward-facing part}.}
\end{definition}
\begin{definition}\label{def:bsp}
    The states $ \vx\oft \in \boundary\Scap $ satisfying equality in \eqref{eq:def_sp} constitute the so-called \gls*{BIP} that is denoted as $ \SBIP \subseteq \boundary\Scap $.
\end{definition}
\begin{remark}\label{rem:kappa}
    By solving \eqref{eq:def_sp} for equality, the \gls*{BIP} $ \SBIP $ can be determined by means of a parameter vector $ \vparamBIP \in \Reals[\sizeX-2] $ through a mapping $ \mapBIP(\vparamBIP) $, $ \mapBIP: \Reals[\sizeX-2] \rightarrow \SBIP $.
\end{remark}

Consider the optimal strategies w.r.t. \eqref{eq:def_sp} to be denoted as $ \vu[\denoteHf] = \vStrategyFB[\denoteHf]^\vartriangle\rBrack[\vx] $ and $ \vu[\denoteLf] = \vStrategyNA[\denoteLf]^\vartriangle(\vx,\vu[\denoteHf]) $. Definition~\ref{def:sp} indicates that a solution trajectory of \eqref{eq:dyn_rel} $ \vSolTrj(\stime;\stime[\denoteZero], \vx\oftZero, \vStrategyNA[\denoteLf](\vx,\vu[\denoteHf]),\vStrategyFB[\denoteHf]^\vartriangle\rBrack[\vx]) $ cannot cross the \gls*{IP} for any $ \vu[\denoteLf] = \vStrategyNA[\denoteLf](\vx,\vu[\denoteHf]) $. Thus, the \gls*{IP} qualifies for constructing a part of $ \boundary\Szcap $, and thus of $ \boundary\SCRegion $.
\subsubsection{Semipermeable Surfaces $ \Ssemsur $ and Closed Barriers $ \Sbarr $}
Recall the goal of determining $ \SCRegion = \Szcap $. A surface separating $ \Szcap $ and $ \Szesc $ must not be crossed during optimal play. Therefore, such a surface must be semipermeable \autocite[Sec. 8.5]{isaacs.1999}.
\begin{definition}[{\autocite[Sec. 2.3]{patsko.2018}}]\label{def:sem_sur}
    Let $ \vonBarr \in \Reals[\sizeX] $ denote a nonzero outward normal to a smooth surface $ \Ssemsur \subseteq \Reals[\sizeX] $ at $ \vx\oft \in \Ssemsur $. A surface $ \Ssemsur $ is semipermeable, if for all $ \vx\oft \in \Ssemsur $
    \begin{equation}\label{eq:def_sem_sur}
        \min\limits_{\vu[\denoteHf]\oft \in \admissU[\denoteHf]}\max\limits_{\vu[\denoteLf]\oft \in \admissU[\denoteLf]} \vonBarr ^{\Transpose} \reldyn = 0.
    \end{equation}
\end{definition}
\vspace{.4em}
\begin{definition}\label{def:closed_barrier}
    One or more $ \Ssemsur $ that jointly delimit a compact region $ \SCRegion \subseteq \Scap $ constitute a so-called closed barrier $ \Sbarr $.
\end{definition}
\subsubsection{Construction of a Closed Barrier $ \Sbarr $}
To determine a region $ \SCRegion \subseteq \Szcap $, the approach in \autocite[Sec. 8.5]{isaacs.1999} aims to construct a closed barrier $ \Sbarr $ by means of semipermeable surfaces $ \Ssemsur $ that smoothly connect to the \gls*{IP} at the \gls*{BIP} $ \SBIP $.

Consider the optimal strategies w.r.t. \eqref{eq:def_sem_sur} to be denoted as $ \vu[\denoteLf] = \vStrategyNA[\denoteLf]^{\denoteOpt}\rBrack[{\vx,\vu[\denoteHf]}] $ and $ \vu[\denoteHf] = \vStrategyFB[\denoteHf]^{\denoteOpt}\rBrack[{\vx}] $. Substituting $ \vStrategyNA[\denoteLf]^{\denoteOpt}\rBrack[{\vx,\vu[\denoteHf]}] $ and $ \vStrategyFB[\denoteHf]^{\denoteOpt}\rBrack[{\vx}] $ into \eqref{eq:def_sem_sur} yields the identity
\begin{equation}\label{eq:xi_}
    \vonBarr^{\Transpose} {\vf^{\vmp}\rBrack[{\vx\oft,\vStrategyNA[\denoteLf]^{\denoteOpt}\rBrack[{\vx\oft,\vStrategyFB[\denoteHf]^{\denoteOpt}\rBrack[{\vx\oft}]}], \vStrategyFB[\denoteHf]^{\denoteOpt}\rBrack[{\vx\oft}]}]} \equiv 0.
\end{equation}
Differentiation of \eqref{eq:xi_} w.r.t. $ \vx $ results in
\begin{equation}\label{eq:xi_diff}
    \scalebox{0.8}{$
        \dFull{\vonBarr}{\stime} = -\rBrack[{\dPartial{}{\vx} \vf^{\vmp}\rBrack[{\vx\oft, \vStrategyNA[\denoteLf]^{\denoteOpt}\rBrack[{\vx\oft,\vStrategyFB[\denoteHf]^{\denoteOpt}\rBrack[{\vx\oft}]}], \vStrategyFB[\denoteHf]^{\denoteOpt}\rBrack[{\vx\oft}]}]}]^{\Transpose} \vonBarr
    $},
\end{equation}
which is evaluated along the semipermeable surfaces $ \Ssemsur $ using 
\begin{equation}\label{eq:xi_cond}
    \vonBarr = \vonScap
\end{equation}
at $ \stime = \stimeBip $ to ensure a smooth connection to the \gls*{IP}. The computations resulting in \eqref{eq:xi_diff} exploit the optimality of $ \vStrategyFB[\denoteHf]^{\denoteOpt}\rBrack[{\vx\oft}] $ and $ \vStrategyNA[\denoteLf]^{\denoteOpt}\rBrack[{\vx\oft,\vu[\denoteHf]\oft}] $ w.r.t. \eqref{eq:def_sem_sur} (for details, see \autocite[Sec. 8.3]{isaacs.1999}).
By using the solution of \eqref{eq:xi_diff} and \eqref{eq:xi_cond}, semipermeable surfaces $ \Ssemsur $ that connect to the \gls*{BIP} are constructed by solving
\begin{equation}\label{eq:rel_dyn_barr}
    \vxdot\oft = \vf^{\vmp}\rBrack[{\vx\oft,\vStrategyNA[\denoteLf]^{\denoteOpt}\rBrack[{\vx\oft,\vu[\denoteHf]\oft}], \vStrategyFB[\denoteHf]^{\denoteOpt}\rBrack[{\vx\oft}]}]
\end{equation}
in retrograde time for all
\begin{equation}\label{eq:x_cond}
    \vx\rBrack[\stimeBip] \in \SBIP,
\end{equation}
which results in a set of trajectories
\begin{equation}\label{eq:sol_trajectories}
    \begin{aligned}
        & \vSolTrj (\stime;\stime[\denoteZero],\vx\oftZero,\vStrategyNA[\denoteLf]^{\denoteOpt}\rBrack[{\vx,\vu[\denoteHf]}], \vStrategyFB[\denoteHf]^{\denoteOpt}\rBrack[{\vx}] ),\text{ with } \vx\oftZero \in \Ssemsur,\\
        & \stime[\denoteZero] < \stimeBip, \text{ and } \vSolTrj(\stimeBip;\stime[\denoteZero],\vx\oftZero,\vStrategyNA[\denoteLf]^{\denoteOpt}\rBrack[{\vx,\vu[\denoteHf]}], \vStrategyFB[\denoteHf]^{\denoteOpt}\rBrack[{\vx}])\in \SBIP,
    \end{aligned}
\end{equation}
that can be parametrized by means of $ \vparamBIP $ (see Remark~\ref{rem:kappa}).

If one or more of these semipermeable surfaces $ \Ssemsur $ intersect with another $ \Ssemsur $ or the \gls*{IP} in a nonleaking manner (see \cite[Sec. 4.3]{lewin.1994}), they jointly compose a closed barrier $ \Sbarr $, thus yielding a compact region $ \SCRegion \subseteq \Scap $, with $ \partial\SCRegion $ composed of $ \Sbarr $ and the \gls*{IP}. If a semipermeable surface $ \Ssemsur $ intersects itself, the portions beyond the intersection are redundant \autocite{buzikov.2023}, \autocite[Sec. 8.5]{isaacs.1999}. 

As \identPE\ can neither force $ \vx\oft \in \SCRegion $ through the barrier $ \Sbarr $ nor through the \gls*{IP}, $ \Sbarr \subseteq \SCRegion \subseteq \Szcap $ holds. Thus, $ \SCRegion $ retains the properties of $ \Szcap $ relevant for ensuring safety by means of $ \vu[\denoteHf] = \vStrategyFB[\denoteHf]^\vartriangle\rBrack[\vx] $ for $ \vx\oft \in $ \gls*{IP}, and $ \vu[\denoteHf] = \vStrategyFB[\denoteHf]^{\denoteOpt}\rBrack[{\vx}] $ for $ \vx\oft \in \Sbarr $. 

Note that by connecting the semipermeable surfaces to the \gls*{BIP}, Assumption~\ref{assump:capZcontainsBoundary} holds automatically.
\subsection{Addressing a Captivity-Escape Game of Kind with a Variable Planning Performance and Safety Bound}\label{sec:analyzing_cegc}
In contrast to \autoref{sec:preliminaries}, consider both $ \pp $ and $ \ssafetyMargin $ to be variable within this section. For addressing \ref{O.1}, we aim to determine $ \pp $ such that the semipermeable surfaces $ \Ssemsur $ resulting from \eqref{eq:rel_dyn_barr} and \eqref{eq:x_cond} jointly constitute a closed barrier $ \Sbarr $ that yields $ \SCRegion \subseteq \Szcap $ of a captivity-escape game of kind with $ \sizeScap = \ssafetyMargin $.

To highlight the dependency on $ \pp $ and $ \sizeScap $, we denote the trajectories in \eqref{eq:sol_trajectories} as $ \vSolTrj[\pp,\sizeScap]^{\Ssemsur} $. Moreover, we denote the part of $ \vSolTrj[\pp,\sizeScap]^{\Ssemsur} $ that contributes to constituting $ \Sbarr $ as $ \vSolTrj[\pp,\sizeScap]^{\Sbarr} $: a part $ \vSolTrj[\pp,\sizeScap]^{\Sbarr} $ starts in an intersection with another $ \vSolTrj[\pp,\sizeScap]^{\Sbarr} $ or the \gls*{IP} at $ \stime = \hat{\stime} $ and ends when reaching the \gls*{BIP} at $ \stime = \stimeBip $. Note that the initial time $ \hat{\stime} $ depends on the particular $ \vx(\stimeBip) \in \SBIP $ related to the respective $ \vSolTrj[\pp,\sizeScap]^{\Sbarr} $. Thus, the respective $ \hat{t} $ can be determined by means of a mapping $ b(\vparamBIP) $, $ b: \Reals[\sizeX-2] \rightarrow \Reals[<\stimeBip] $ (see Remark~\ref{rem:kappa}).

To potentially contain a part $ \vSolTrj[\pp,\sizeScap]^{\Sbarr} $, trajectories $ \vSolTrj[\pp,\sizeScap]^{\Ssemsur} $ must approach the \gls*{BIP} $ \SBIP $ from the interior of $ \Scap $.
\begin{proposition}\label{prop:dderrmeas}
    Trajectories $ \vSolTrj[\pp,\sizeScap]^{\Ssemsur} $ approach the \gls*{BIP} $ \SBIP $ from the interior of the captivity set $ \Scap $, if
    \begin{equation}\label{eq:dderrmeas}
        \mathrm{d}^2/\mathrm{d}\stime^2\, \Euclidean[{\vfselXcrit\rBrack[{\vSolTrj[\pp,\sizeScap]^{\Ssemsur}}]}] \big\vert_{\stime = \stimeBip} < \vec{0}.
    \end{equation}
\end{proposition}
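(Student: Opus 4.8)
The plan is to reduce the geometric claim to the behavior of a single scalar function along the trajectory and then apply a second-order Taylor argument. Define $g(\stime) \coloneq \Euclidean[{\vfselX\rBrack[\vSolTrj\supSemsur\indPSB(\stime)]}]$, the error measure evaluated along the trajectory. By Definition~\ref{def:scap} and Remark~\ref{rem:sphere_errmeas}, a state lies in $\mathrm{int}(\Scap)$ iff $g < \sSizeScap$ and on $\bScap$ iff $g = \sSizeScap$. Thus the claim that $\vSolTrj\supSemsur\indPSB$ approaches $\Sbnup$ from the interior of $\Scap$ is equivalent to showing $g(\tBnup) = \sSizeScap$ together with $g(\stime) < \sSizeScap$ for all $\stime$ in a punctured neighborhood of $\tBnup$.

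First I would establish the two boundary data at $\stime = \tBnup$. Since $\vSolTrj\supSemsur\indPSB(\tBnup) \in \Sbnup \subseteq \bScap$ (see \eqref{eq:sol_trajectories} and Definition~\ref{def:bnup}), Remark~\ref{rem:sphere_errmeas} gives $g(\tBnup) = \sSizeScap$. For the first derivative, note that $\vfselX\rBrack[\vx] = \bm{P}\vx$ with $\bm{P}$ idempotent, so $\nabla_{\vx}\, g = \bm{P}\vx / \Euclidean[{\bm{P}\vx}] = \vfselX\rBrack[\vx]/\Euclidean[{\vfselX\rBrack[\vx]}]$, which on $\bScap$ (where $\Euclidean[{\bm{P}\vx}] = \sSizeScap > 0$) is exactly the unit outward normal $\vonScap\indvx$ of Definition~\ref{def:outward_normal}. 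By the chain rule, $\dot g = \vonScap\indvx\transpose \vf\supvmp$ along the trajectory at any point of $\bScap$, in particular at $\tBnup$ with the barrier strategies $\vStrategyNA\indlf\opt$, $\vCtrlLaw\indhf\opt$ substituted.

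The crux is then $\dot g(\tBnup) = 0$. The barrier trajectory $\vSolTrj\supSemsur\indPSB$ is generated so that the semipermeability identity \eqref{eq:xi_} holds, i.e.\ $\vonBarr\indvx\transpose \vf\supvmp \equiv 0$ along it, and the smooth-connection condition \eqref{eq:xi_cond} forces $\vonBarr\indvx = \vonScap\indvx$ at $\stime = \tBnup$. Combining these yields $\dot g(\tBnup) = \vonScap\indvx\transpose \vf\supvmp = \vonBarr\indvx\transpose \vf\supvmp = 0$; equivalently, this is just the \gls*{BNUP} equality in \eqref{eq:def_nup}, since at the connection the optimal semipermeability strategies coincide with those of the min-max defining the \gls*{NUP}.

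With $g(\tBnup) = \sSizeScap$, $\dot g(\tBnup) = 0$, and the hypothesis $\ddot g(\tBnup) < 0$, a second-order Taylor expansion gives $g(\stime) = \sSizeScap + \tfrac{1}{2}\ddot g(\tBnup)(\stime - \tBnup)^2 + o((\stime-\tBnup)^2)$, so that $\tBnup$ is a strict local maximizer of $g$ with value $\sSizeScap$. Hence $g(\stime) < \sSizeScap$ for $\stime \neq \tBnup$ near $\tBnup$, i.e.\ $\vSolTrj\supSemsur\indPSB$ lies in $\mathrm{int}(\Scap)$ in a neighborhood of $\tBnup$ and thus approaches $\Sbnup$ from the interior. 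The main obstacle I anticipate is the rigorous justification of $\dot g(\tBnup) = 0$: it hinges both on identifying $\nabla g$ with the outward normal $\vonScap\indvx$ and on invoking the matching of the semipermeability optimum with the \gls*{BNUP} equality at the connection point. Additionally, the strictness of the hypothesis is essential, since $\ddot g(\tBnup) < 0$ (not merely $\leq 0$) is what turns the tangency into a strict local maximum.
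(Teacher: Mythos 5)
Your proposal is correct and follows essentially the same route as the paper's proof: first derivative of the error measure vanishes at $\tBnup$ (the paper attributes this directly to Definition~\ref{def:bnup}, you derive it explicitly via the chain rule, the identification of the gradient with $\vonScap\indvx$, and the matching condition \eqref{eq:xi_cond}), which together with the hypothesis \eqref{eq:dderrmeas} makes $\tBnup$ a strict local maximizer with value $\sSizeScap$, forcing the trajectory into the interior of $\Scap$ in a punctured neighborhood. The extra detail you supply on $\dot g(\tBnup)=0$ is a faithful unpacking of the step the paper leaves implicit, not a different argument.
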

\begin{proof}
    \begin{equation}\label{eq:derrmeas}
        \mathrm{d}/\mathrm{d}\stime\, \Euclidean[{\vfselXcrit\rBrack[{\vSolTrj[\pp,\sizeScap]^{\Ssemsur}}]}] \big\vert_{\stime = \stimeBip} = \vec{0}
    \end{equation}
    holds by Definition~\ref{def:bsp}. If both \eqref{eq:dderrmeas} and \eqref{eq:derrmeas} hold, $ \stimeBip $ is a strict local maximizer of $ \Euclidean[{\vfselXcrit(\vSolTrj[\pp,\sizeScap]^{\Ssemsur})}] $ \autocite[Theorem 2.4]{nocedal.2006} with the maximum value $ \Euclidean[{\vfselXcrit(\vSolTrj[\pp,\sizeScap]^{\Ssemsur})}] \big\vert_{\stime = \stimeBip} = \sizeScap $. Thus, $ \Euclidean[{\vfselXcrit(\vSolTrj[\pp,\sizeScap]^{\Ssemsur})}] < \sizeScap $ holds in a punctured neighborhood of $ \stimeBip $, so $ \vSolTrj[\pp,\sizeScap]^{\Ssemsur} $ approaches the \gls*{BIP} $ \SBIP $ from the interior of $ \Scap $, and also leaves $ \SBIP $ into $ \Scap $.
\end{proof}

Moreover, Definition~\ref{def:closed_barrier} requires $ \vSolTrj[\pp,\sizeScap]^{\Sbarr} \subseteq \Scap $, which holds by Definition~\ref{def:scap} if 
\begin{equation}\label{eq:barr_inside_scap}
    \Euclidean[{\vfselXcrit\rBrack[{\vSolTrj[\pp,\sizeScap]^{\Sbarr}}]}] \leq \sizeScap, \quad \hat{\stime} \leq \stime \leq \stimeBip.
\end{equation}

Furthermore, to ensure $ \SCRegion $ is compact (see Assumption~\ref{assump:captivity_zone_exists}), trajectories $ \vSolTrj[\pp,\sizeScap]^{\Ssemsur} $ that contain a part $ \vSolTrj[\pp,\sizeScap]^{\Sbarr} $ must not leave $ \SCRegion $ immediately after passing the \gls*{BIP} $ \SBIP $.
\begin{assumption}\label{assump:constitutes_barr}
    $ \vSolTrj[\pp,\sizeScap]^{\Ssemsur} $ contains a part $ \vSolTrj[\pp,\sizeScap]^{\Sbarr} $.
\end{assumption}
\begin{proposition}\label{prop:goes_int}
    Let Assumption~\ref{assump:constitutes_barr} hold. If 
    \begin{equation}\label{eq:goes_int}
        \mathrm{d}/\mathrm{d}\stime\, \vSolTrj[\pp,\sizeScap]^{\Ssemsur} \big\vert_{\stime = \stimeBip} \text{ points towards the \gls*{IP}}, 
    \end{equation}
    and \eqref{eq:dderrmeas} holds, $ \vSolTrj[\pp,\sizeScap]^{\Ssemsur} $ leaves the \gls*{BIP} $ \SBIP $ into $ \SCRegion $.
\end{proposition}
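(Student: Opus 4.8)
The plan is to show that, for $\stime$ slightly larger than $\tBnup$, the forward continuation of $\vSolTrj\supSemsur\indPSB$ is an interior point of $\Scap$ lying on the side of $\bScap$ occupied by the \gls*{NUP}, which is exactly where $\Zcap$ resides. First I would read off the local behavior of the trajectory at $\tBnup$ from the two derivative conditions. By Definition~\ref{def:bnup}, \eqref{eq:derrmeas} holds, so $\mathrm{d}/\mathrm{d}\stime\,\Euclidean[{\vfselX\rBrack[\vSolTrj\supSemsur\indPSB]}]$ vanishes at $\tBnup$. Since $\bScap$ is the level set $\{\Euclidean[{\vfselX\rBrack[\vx\oft]}] = \sSizeScap\}$ (Definition~\ref{def:scap}, Remark~\ref{rem:sphere_errmeas}), the gradient of $\Euclidean[{\vfselX\rBrack[\vx\oft]}]$ there is parallel to the outward normal $\vonScap\indvx$; hence the vanishing derivative states that the velocity $\mathrm{d}/\mathrm{d}\stime\,\vSolTrj\supSemsur\indPSB \big\vert_{\stime=\tBnup}$ is orthogonal to $\vonScap\indvx$, i.e.\ tangent to $\bScap$. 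By \eqref{eq:goes_int}, this tangential velocity is directed along $\bScap$ towards the \gls*{NUP}.

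Second, I would invoke Proposition~\ref{prop:dderrmeas}: under \eqref{eq:dderrmeas}, $\tBnup$ is a strict local maximizer of $\Euclidean[{\vfselX\rBrack[\vSolTrj\supSemsur\indPSB]}]$ with maximal value $\sSizeScap$, so $\Euclidean[{\vfselX\rBrack[\vSolTrj\supSemsur\indPSB]}] < \sSizeScap$ in a punctured neighborhood of $\tBnup$. In particular, for $\stime$ slightly greater than $\tBnup$ the continuation lies in the interior of $\Scap$ rather than escaping it; and by Assumption~\ref{assump:constitutes_barr} the trajectory genuinely reaches $\Sbnup$, so this continuation is the piece that leaves the \gls*{BNUP}.

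I would finish with a Taylor expansion of $\vSolTrj\supSemsur\indPSB$ about $\tBnup$ in normal/tangential coordinates of $\bScap$ at $\vx(\tBnup)$. The first-order displacement is purely tangential (its normal component vanishes) and points towards the \gls*{NUP}, while the inward curvature forced by \eqref{eq:dderrmeas} drives the second-order term into the interior of $\Scap$. Thus, for sufficiently small $\stime - \tBnup > 0$, the continuation is an interior point of $\Scap$ whose nearest boundary point lies on the \gls*{NUP} (and no longer on the \gls*{BNUP}). Since $\Zcap \subseteq \Scap$ and the \gls*{NUP} is exactly the portion of $\bZcap$ lying on $\bScap$ (the barrier forming the remainder), every interior point of $\Scap$ sufficiently close to such a \gls*{NUP} point belongs to $\Zcap$ and not to the escape zone $\Zesc$ (Definition~\ref{def:captivity_zone}, Definition~\ref{def:escape_zone}). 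As this is precisely the situation of the continuation, $\vSolTrj\supSemsur\indPSB$ leaves $\Sbnup$ into $\Zcap$.

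The main obstacle, I expect, is making this last identification rigorous. The \gls*{BNUP} is a corner of $\bZcap$ where the \gls*{NUP} meets the barrier, and because the velocity at $\tBnup$ is tangential, whether the continuation enters $\Zcap$ or the thin escape-zone wedge bounded by the barrier and the usable part of $\bScap$ is decided only at second order. The crux is to verify that \emph{tangential-towards-NUP} together with \emph{inward-curving} places the continuation strictly on the \gls*{NUP} side; the sign information in \eqref{eq:goes_int} and \eqref{eq:dderrmeas} is exactly what separates the two cases, so the argument hinges on carefully tracking these two signs through the second-order expansion.
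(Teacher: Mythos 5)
Your argument is correct and follows essentially the same route as the paper: the paper's own proof is a single sentence observing that, under Assumption~\ref{assump:constitutes_barr}, the \gls*{BNUP} separates the \gls*{NUP} (eternal captivity) from the states admitting immediate escape, so that the tangential direction in \eqref{eq:goes_int} combined with the inward curvature from \eqref{eq:dderrmeas} forces the continuation onto the captivity side. Your Taylor-expansion elaboration and the sign-tracking concern you raise simply make explicit the steps the paper leaves implicit; no new idea is introduced and none is missing relative to the paper's level of rigor.
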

\begin{proof}
    If Assumption~\ref{assump:constitutes_barr} holds, $ \SBIP $ separates the states on $ \boundary\Scap $ leading to immediate escape from the states resulting in eternal captivity, i.e., the \gls*{IP} (see Definition~\ref{eq:def_sp}).
\end{proof}
\begin{corollary}\label{cor:kappSzcap}
    Let Assumption~\ref{assump:constitutes_barr} hold. For any $ \vu[\denoteLf] = \vStrategyNA[\denoteLf]\rBrack[{\vx,\vu[\denoteHf]}] $, a trajectory $ \vSolTrj(\stime;\stime[\denoteZero],\vx\oftZero,\vStrategyNA[\denoteLf]\rBrack[{\vx,\vu[\denoteHf]}],\vStrategyFB[\denoteHf]^{\denoteOpt}\rBrack[{\vx}]) $ that solves \eqref{eq:dyn_rel} with $ \vx\oftZero \in \Sbarr $ and $ \stime[\denoteZero] < \stimeBip $ reaches the interior of $ \SCRegion $ at the latest after passing $ \SBIP $, if \eqref{eq:dderrmeas} is fulfilled: $ \Sbarr \subseteq \SCRegion $.
\end{corollary}
\begin{proof}
    The worst-case trajectory $ \vSolTrj(\stime;\stime[\denoteZero],\vx\oftZero,\vStrategyNA[\denoteLf]^{\denoteOpt}(\vx,$
    $\vu[\denoteHf]),\vStrategyFB[\denoteHf]^{\denoteOpt}\rBrack[{\vx}]) $ with $ \vx\oftZero \in \Sbarr $ and $ \stime[\denoteZero] < \stimeBip $ reaches the interior of $ \SCRegion $ after passing the \gls*{BIP} $ \SBIP $ at $ \stime = \stimeBip $ (see Proposition~\ref{prop:goes_int}). Thus, any suboptimal $ \vStrategyNA[\denoteLf]\rBrack[{\vx,\vu[\denoteHf]}] $ results in a trajectory reaching the interior of $ \SCRegion $ at $ \stime < \stimeBip $.
\end{proof}
\begin{corollary}\label{cor:vmin_is_compact}
    If $ \SCRegion \subseteq \Szcap $ is nonempty, it is compact.
\end{corollary}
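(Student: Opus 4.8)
The plan is to establish compactness of $ \Zcap $ via the Heine--Borel characterization, i.e., to show separately that $ \Zcap \subseteq \admissX \subseteq \reals[\sizeX] $ is both closed and bounded. Throughout I would exploit that, by the construction in \autoref{sec:preliminaries}, the boundary $ \bZcap $ is composed of a portion of the \gls*{NUP} of $ \bScap $ together with the closed barrier $ \barr $ (see \autoref{def:closed_barrier}).

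For closedness I would argue that $ \Zcap $ contains its topological boundary $ \bZcap $. On the one hand, \autoref{def:closed_barrier} already gives $ \barr \subseteq \Zcap $. On the other hand, the states forming the relevant portion of the \gls*{NUP} satisfy \eqref{eq:def_nup}, so by \autoref{def:nup} player \identPE\ cannot force a trajectory across them; these states therefore result in eternal captivity and lie in $ \Zcap $. Hence $ \bZcap \subseteq \Zcap $ and $ \Zcap $ is closed. Equivalently, one may show that $ \Zesc $ is open relative to $ \Scap $: by the continuous dependence of the unique trajectories of \eqref{eq:dyn_rel} on their initial data (see \autoref{def:rel_sys}), an initial state that escapes in finite time crosses $ \bScap $ transversally, so all sufficiently close initial states escape as well; then $ \Zcap = \Scap \setminus \Zesc $ is closed because $ \Scap $ is the closed sublevel set $ \cBrack[{\vx\oft : \Euclidean[{\vfselX\rBrack[\vx\oft]}] \leq \sSizeScap}] $ of a continuous map.

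For boundedness I would split the relative state into its safety-critical and non-critical components. The critical components are controlled directly by $ \Zcap \subseteq \Scap $: by \autoref{def:scap} every $ \vx\oft \in \Zcap $ satisfies $ \Euclidean[{\vfselX\rBrack[\vx\oft]}] \leq \sSizeScap $, so they lie in a ball of radius $ \sSizeScap $. The non-critical components are confined by the barrier: by \autoref{cor:kappZcap} the worst-case trajectories seeded on $ \barr $ enter $ \mathrm{int}(\Zcap) $ after passing $ \Sbnup $, so $ \barr $ together with the \gls*{NUP} encloses $ \Zcap $ as the bounded side of the subdivision of $ \Scap $ rather than the unbounded one; moreover each barrier segment $ \vSolTrj\supBarr\indPSB $ is traced over the finite interval $ [\hat{\stime},\tBnup] $ and stays inside $ \Scap $ by \autoref{prop:barr_inside_scap}, so $ \bZcap $ is bounded and the region it encloses is bounded as well. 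Combining closedness with boundedness yields compactness.

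The main obstacle is exactly the boundedness claim in the non-critical directions. Since $ \Scap $ is merely a sublevel set of $ \Euclidean[{\vfselX\rBrack[\cdot]}] $, it is in general an unbounded cylinder in the directions carrying no safety-critical constraints (indeed the \gls*{BNUP} is parametrized over all of $ \reals[r] $, see \autoref{rem:kappa}). Hence the crux is to argue that, whenever $ \Zcap $ is nonempty, the closed barrier $ \barr $ genuinely caps off the captivity region at a finite non-critical extent, so that $ \Zcap $ is the bounded component of the subdivision and no unbounded captivity zone can arise. This is where the hypothesis that $ \Zcap $ is nonempty, together with the closing behavior guaranteed by \autoref{prop:goes_int} and \autoref{cor:kappZcap}, must be invoked.
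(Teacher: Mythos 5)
Your closedness argument coincides with the paper's own: the paper likewise observes that $ \bZcap $ is built from the \gls*{NUP} and the closed barrier $ \barr $, that $ \barr \subseteq \Zcap $ (via Corollary~\ref{cor:kappZcap}) and that the \gls*{NUP} lies in $ \Zcap $ by Definition~\ref{def:nup}, so $ \Zcap $ contains its boundary. Where you diverge is boundedness. The paper disposes of it in one line: $ \Zcap \subseteq \Scap $ and $ \Scap $ is taken to be compact (citing Definition~\ref{def:scap}), hence $ \Zcap $ is bounded. You instead question that premise --- correctly observing that, read literally, $ \Scap = \cBrack[{\vx\oft\in\admissX \bvert \Euclidean[{\vfselX\rBrack[\vx\oft]}]\leq\sSizeScap}] $ is an unbounded cylinder in the directions with $ p\indRunJ = 0 $ unless $ \admissX $ is itself bounded or $ \bm{P} $ is the identity --- and you then try to recover boundedness from the barrier ``capping off'' the non-critical directions. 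That replacement argument is not actually completed: the claim that $ \barr $ together with the \gls*{NUP} encloses $ \Zcap $ as the \emph{bounded} side of the subdivision of $ \Scap $ is exactly what would need proof, and your final paragraph concedes that this remains an open obstacle. So, as a self-contained proof, your proposal has a gap in the boundedness step; as a reading of the paper, it usefully exposes that the one-line boundedness claim silently relies on $ \Scap $ being compact, which holds in the numerical example (where $ \vfselX $ is the identity) and is assumed throughout the paper, but does not follow from Definition~\ref{def:scap} alone. If you grant the paper's standing assumption that $ \Scap $ is compact, your argument collapses to the paper's and is correct.
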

\begin{proof}
    As $ \Scap $ is compact (see Definition~\ref{def:scap}), $ \SCRegion \subseteq \Szcap \subseteq \Scap $ is bounded. Moreover, $ \boundary\SCRegion $ is composed of the \gls*{IP} and a closed barrier $ \Sbarr $. If $ \SCRegion $ is nonempty, $ \Sbarr \subseteq \SCRegion $ (see Corollary~\ref{cor:kappSzcap}) and \gls*{IP} $\subseteq \SCRegion $ (see Definition~\ref{def:sp}). Thus, $ \SCRegion $ is closed.
\end{proof}

Depending on the employed low-fidelity and high-fidelity models, \eqref{eq:dderrmeas}, \eqref{eq:barr_inside_scap}, and \eqref{eq:goes_int} can only be fulfilled for a certain range of $ \sizeScap \geq \sizeScap_{\min} $ and $ \pp \leq \pp_{\max} $. The respective limits can be determined by solving \eqref{eq:dderrmeas}, \eqref{eq:barr_inside_scap}, and \eqref{eq:goes_int}. 

Consider the manifold in which the parts $ \vSolTrj[\pp,\sizeScap]^{\Sbarr} $ intersect with another $ \vSolTrj[\pp,\sizeScap]^{\Sbarr} $ or the \gls*{IP} for jointly composing a closed barrier $ \Sbarr $, to be denoted as $ \Smanifold \subseteq \Scap $. For a given $ \ssafetyMargin = \sizeScap \geq \sizeScap_{\min} $, there can be various manifolds $ \Smanifold $ in which the parts $ \vSolTrj[\pp,\sizeScap]^{\Sbarr} $ may intersect for jointly composing a closed barrier $ \Sbarr $ when fulfilling \eqref{eq:dderrmeas}, \eqref{eq:barr_inside_scap}, \eqref{eq:goes_int}. Each of these manifolds $ \Smanifold $ relates to a different planning performance $ \pp $, and thus, there is a mapping $ \vcondSol(\pp,\sizeScap) $, $ \vcondSol: \Reals[+,\leq\pp_\mathrm{max}] \times \Reals[\geq \sizeScap_\mathrm{min}] \rightarrow \powerset(\Smanifold) $, where $ \powerset(\Smanifold) $ denotes the power set of $ \Smanifold $.

Of particular interest is the mapping $ \overline{\vcondSol}(\sizeScap) $, $ \overline{\vcondSol}: \Reals[\geq \sizeScap_\mathrm{min}] \rightarrow \powerset(\Smanifold) $ that relates the optimal $ \pp $ to a given $ \sizeScap \geq \sizeScap_\mathrm{min} $ via
\begin{equation}\label{eq:manifold}
    \vSolTrj[\pp,\sizeScap]^{\Sbarr} \big\vert_{\stime = \hat{\stime}} \in \overline{\vcondSol}(\sizeScap).
\end{equation}
For a specific pair of low-fidelity and high-fidelity models, the mapping $ \overline{\vcondSol}(\sizeScap) $ can be determined from offline optimizing $ \pp $ subject to \eqref{eq:dderrmeas}, \eqref{eq:barr_inside_scap}, \eqref{eq:goes_int}, and \eqref{eq:manifold} for different $ \sizeScap \geq \sizeScap_{\min} $.
\begin{remark}\label{rem:powerful}
    If the low-fidelity model is powerful enough to always remain within some distance from the high-fidelity model, $ \Szcap $ exists \autocite{herbert.2017}, \autocite{chen.2021}. Thus, $ \overline{\vcondSol}(\sizeScap) $ exists for a proper pair of low-fidelity and high-fidelity models, if $ \SBIP $ is nonempty and $ \pp $ enables sufficiently restricting $ \admissX[\denoteLf] $, $ \admissU[\denoteLf] $, and $ \vmp[\denoteLf] $.
\end{remark}
\begin{remark}
    If desired, additional requirements can be considered by means of suitable objectives or constraints when determining $ \overline\vcondSol(\sizeScap) $ (e.g., requiring $ \SCRegion $ to be a connected set). 
\end{remark}
\subsubsection{Addressing \ref{O.1} and \ref{O.0}}
Given $ \ssafetyMargin = \sizeScap \geq \sizeScap_\mathrm{min} $, \ref{O.1} is addressed by solving \eqref{eq:dderrmeas}, \eqref{eq:barr_inside_scap}, \eqref{eq:goes_int}, and \eqref{eq:manifold} for $ \pp $. Similarly, \ref{O.0} is addressed by solving the respective equations for $ \ssafetyMargin = \sizeScap $, given $ \pp \leq \pp_\mathrm{\max} $ (or suitable $ \admissX[\denoteLf] $, $ \admissU[\denoteLf] $, and $ \vmp[\denoteLf] $).
\subsubsection{Addressing \ref{O.2}}
To address \ref{O.2}, the strategy $ \vStrategyFB[\denoteHf]^{\denoteGopt}\rBrack[\vx] $ used in the minimal-intervention safety controller must be determined (see Theorem~\ref{theo:safety_controller} and Corollary~\ref{cor:safety_controller}).
\begin{proposition}
    The optimal strategy $ \vStrategyFB[\denoteHf]^{\denoteGopt}\rBrack[\vx] $ employed in Theorem~\ref{theo:safety_controller} and Corollary~\ref{cor:safety_controller} is given by
    \begin{equation}
        \vStrategyFB[\denoteHf]^{\denoteGopt}\rBrack[\vx\oft] = 
        \left\{
            \begin{array}{ll}
                \vStrategyFB[\denoteHf]^\vartriangle\rBrack[\vx\oft] & \text{if } \vx\oft \in \text{\gls*{IP}} \cap \boundary\SCRegion,\\
                \vStrategyFB[\denoteHf]^{\denoteOpt}\rBrack[\vx\oft] & \text{if } \vx\oft \in \Sbarr \subseteq \boundary\SCRegion.
            \end{array}
            \right.
        \end{equation}
\end{proposition}
\vspace{.5em}
\begin{proof}
    The outlined method constructs $ \boundary\SCRegion $ utilizing the \gls*{IP} and $ \Sbarr $. For a state $ \vx\oft \in $ \gls*{IP}$ \,\cap\, \boundary\SCRegion $, the state-feedback strategy $ \vu[\denoteHf] = \vStrategyFB[\denoteHf]^\vartriangle\rBrack[\vx] $ ensures $ \vonScap^{\Transpose} \reldyn \leq 0 $ for any $ \vu[\denoteLf]\oft $ (see Definition~\ref{def:sp}). Similarly, the strategy $ \vu[\denoteHf] = \vStrategyFB[\denoteHf]^{\denoteOpt}\rBrack[{\vx}] $ ensures that any trajectory starting in $ \vx\oft \in \Sbarr \subseteq \boundary\SCRegion $ is steered into the interior of $ \SCRegion \subseteq \Szcap $ (see Corollary~\ref{cor:kappSzcap}).
\end{proof}

\section{Numerical Example and\\ Comparison to \glsentryshort{fastrack}}\label{sec:06_numerical_example}
\begin{figure*}[!t]
    \centering
    \begin{subfigure}[t]{0.329\textwidth}
        \captionsetup{margin=5pt}
        \centering
        \includegraphics{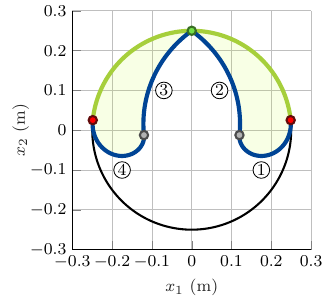}
        \caption{The \gls*{TEB} that results from addressing \ref{O.1} using the presented method. The semipermeable surfaces $ \Ssemsur $ intersect at $ \overline\vcondSol(\sizeScap) = \big\{\begin{bmatrix} 0 & \SI{0.25}{\meter} \end{bmatrix}^{\Transpose}\big\} $ for constituting a closed barrier $ \Sbarr $.}
        \label{fig:025}
    \end{subfigure}
    \hfill
    \begin{subfigure}[t]{0.329\textwidth}
        \captionsetup{margin=5pt}
        \centering
        \includegraphics{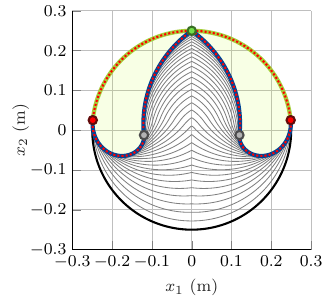}
        \caption{For $ \sv[\denoteLf] = \SI{0.10}{\meter \per \second} $, there is no significant difference between the \gls*{TEB} resulting from the presented method and the \gls*{TEB} resulting from \gls*{fastrack}. The boundaries of the \glspl*{TEB} coincide.}
        \label{fig:025_comp}
    \end{subfigure}
    \hfill
    \begin{subfigure}[t]{0.329\textwidth}
        \captionsetup{margin=5pt}
        \centering
        \includegraphics{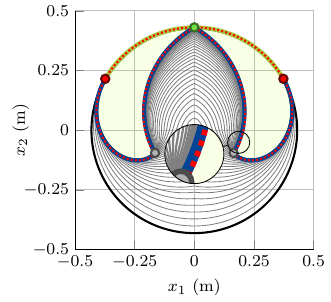}
        \caption{For $ \sv[\denoteLf] = \SI{0.50}{\meter \per \second} $, there is a difference between the boundary of the \gls*{TEB} resulting from the presented method and the \gls*{TEB} that is determined with \gls*{fastrack}. The boundaries do not fully coincide.}
        \label{fig:042_comp}
    \end{subfigure}
    \caption{The \gls*{TEB} resulting from the presented method is depicted by the green area. The black circle depicts the boundary of the captivity set $ \boundary\Scap $, the green line depicts the \gls*{IP}, and the blue line depicts the closed barrier $ \Sbarr $. The red dots depict the \gls*{BIP}, the green dots depict $ \overline\vcondSol(\sizeScap) $, and the gray dots depict where $ \su[\denoteHf]^{\denoteOpt} $ switches (see \eqref{eq:sols_up}). The gray lines illustrate how the numerical \gls*{fastrack} computations evolve, starting in $ \boundary\Scap $ and converging to the dotted red line, which depicts the boundary of the \gls*{TEB} determined with the method in \gls*{fastrack}.}
\end{figure*}
In this section, we demonstrate our method and compare it to the state of the art. For notational brevity, we neglect dependencies on time and, e.g., write $ \vx $ instead of $ \vx\oft $. All results obtained using the presented method are computed with Wolfram Mathematica to maintain numerical accuracy.
\subsection{Setting up the Relative System}\label{sec:num_ex_rel_sys}
Consider the dynamics of the homicidal chauffeur game studied in \autocite{patsko.2018}, \autocite{isaacs.1999}, \autocite{buzikov.2023}. The low-fidelity model used for planning employs the dynamics of the pedestrian, and the high-fidelity model used for tracking employs the dynamics of the chauffeur. The state vectors are given by $ \setlength{\arraycolsep}{2.5pt} \vx[\denoteLf] = \begin{bmatrix} \sxOne[\denoteLf] & \sxTwo[\denoteLf] \end{bmatrix}^{\Transpose} $ and $ \setlength{\arraycolsep}{2.5pt} \vx[\denoteHf] = \begin{bmatrix} \sxOne[\denoteHf] & \sxTwo[\denoteHf] & \so[\denoteHf] \end{bmatrix}^{\Transpose} $, and the dynamics are given by
\begin{equation}\label{eq:example_dynamics}
    \vxdot[\denoteLf] =
    \begin{bmatrix}
        \sv[\denoteLf] \sin\rBrack[{\su[\denoteLf]}]\\
        \sv[\denoteLf] \cos\rBrack[{\su[\denoteLf]}]\\
    \end{bmatrix}
    \text{ and }
    \vxdot[\denoteHf] = 
    \begin{bmatrix}
        \sv[\denoteHf] \sin\rBrack[{\so[\denoteHf]}]\\
        \sv[\denoteHf] \cos\rBrack[{\so[\denoteHf]}]\\
        \syaw[\denoteHf] \su[\denoteHf] 
    \end{bmatrix},
\end{equation}
with the inputs $ \Abs[{\su[\denoteHf]}] \leq 1 $ and $ \Abs[{\su[\denoteLf]}] \leq \pi $, the tracking model's maximum yaw rate $ 0 \leq \syaw[\denoteHf] < \infty $, and the constant velocities $ 0 \leq \sv[\denoteLf] < \sv[\denoteHf] $, which ensure that the tracking model is sufficiently fast to potentially remain within some distance from the planning model (see Remark~\ref{rem:powerful}). In $ \vmp[\denoteLf] $, there is only one parameter: $ \vmp[\denoteLf] = \sv[\denoteLf] $. As constraining the planning model per se in $ \sxOne[\denoteLf] $, $ \sxTwo[\denoteLf] $, or $ \su[\denoteLf] $ is impractical for motion planning, we use $ \pp = \vPB = \sv[\denoteLf] $.
The relative system's state vector results from \eqref{eq:rel_sys}:
\begin{equation}\label{eq:num_ex_rel_state}
    \vx = 
    \begin{bmatrix}
         \sxOne \\ \sxTwo \\ \so[\denoteHf]
    \end{bmatrix}
    =
    \relXTrans\rBrack[{\vx[\denoteLf],\vx[\denoteHf]}]
    \rBrack[
        \matchX
        \begin{bmatrix}
            { \sxOne[\denoteLf]} \\ {\sxTwo[\denoteLf]}
        \end{bmatrix} 
        -
        \begin{bmatrix}
            { \sxOne[\denoteHf]} \\ {\sxTwo[\denoteHf]} \\ {\so[\denoteHf]}
        \end{bmatrix} 
    ].
\end{equation}
We follow the convention in \autocite{patsko.2018}, \autocite{isaacs.1999}, \autocite{buzikov.2023} by using
\begin{equation}\label{eq:ne_relative_state}
    \setlength{\arraycolsep}{2.5pt}
    {\footnotesize
    \!\relXTrans\rBrack[{\vx[\denoteLf],\vx[\denoteHf]}] = 
    \begin{bmatrix*}[c]
        \cos\rBrack[{\so[\denoteHf]}] & -\sin\rBrack[{\so[\denoteHf]}] & \phantom{-}0\\
        \sin\rBrack[{\so[\denoteHf]}] &  \phantom{-}\cos\rBrack[{\so[\denoteHf]}] & \phantom{-}0\\
        0 & \phantom{-}0 & -1
    \end{bmatrix*}
    }
    \text{ and }
    \setlength{\arraycolsep}{2.5pt}
    {\footnotesize
    \matchX = \begin{bmatrix} 1 & 0 \\ 0 & 1\\ 0 & 0 \end{bmatrix}
    },
    \setlength{\arraycolsep}{5pt}
\end{equation}
which align the velocity $ \sv[\denoteHf] $ of the tracking model with the $ \sxTwo $-axis. The input $ \su[\denoteHf] $ is measured clockwise from the $ \sxTwo $-axis.
As the planning model fully controls the relative orientation $ \so[\denoteHf] - \su[\denoteLf] $, the state $ \so[\denoteHf] $ becomes obsolete. Thus, the relative state vector reduces to $ \vx = \begin{bmatrix} \sxOne & \sxTwo \end{bmatrix}^{\Transpose}\! $, yielding $ \vxdot = \reldynNE $:
\begin{equation}\label{eq:num_ex_rel_dyn_simp}
    \begin{bmatrix}
         \sxOnedot \\ \sxTwodot
    \end{bmatrix}
    =
    \begin{bmatrix*}[l]
        -           \sxTwo\syaw[\denoteHf] \su[\denoteHf] + \sv[\denoteLf]\sin\rBrack[{\su[\denoteLf]}] \\
        \phantom{-} \sxOne\syaw[\denoteHf] \su[\denoteHf] + \sv[\denoteLf]\cos\rBrack[{\su[\denoteLf]}] - \sv[\denoteHf] \\
    \end{bmatrix*}.
\end{equation}
\subsection{Solving the Related Game with Variable $ \pp $ and $ \sizeScap $}
To consider safety-critical constraints in the $ \sxOne $-$ \sxTwo $-plane, $ \vfselXcrit\rBrack[\vx] = \vx $ is required, yielding $ \vonScap = \vx $ for $ \vx \in \boundary\Scap $. Solving \eqref{eq:def_sp} yields the \gls*{IP} as $ \cBrack[{\vx\in\boundary\Scap \vertM \sizeScap \sv[\denoteLf]/\sv[\denoteHf] \leq \sxTwo}] $ and the \gls*{BIP}
\begin{equation}\label{eq:barr_init_cond}
    \SBIP =
    \cBrack[{\vx \in \boundary\Scap \vertM \sxOne = \sizeScap \sqrt{1-(\sv[\denoteLf]/\sv[\denoteHf])^2},\, \sxTwo = \sizeScap \sv[\denoteLf]/\sv[\denoteHf]}],
\end{equation}
containing two states symmetrical about the $ \sxTwo $-axis. Thus, there are two $ \vSolTrj[\pp,\sizeScap]^{\Ssemsur} $, each constituting one semipermeable surface $ \Ssemsur $. For determining the optimal $ \su[\denoteHf]^{\denoteOpt} = \sStrategyFB[\denoteHf]^{\denoteOpt}\rBrack[{\vx}] $ and $ \su[\denoteLf]^{\denoteOpt} = \sStrategyNA[\denoteLf]^{\denoteOpt}\rBrack[{\vx,\su[\denoteHf]}] $ that define $ \Ssemsur $, \eqref{eq:num_ex_rel_dyn_simp} is inserted into \eqref{eq:def_sem_sur}:
\begin{equation}\label{eq:ne_along_semsur}
    \begin{aligned}
        \min\limits_{\su[\denoteHf]\in\admissU[\denoteHf]}\max\limits_{\su[\denoteLf]\in\admissU[\denoteLf]}
        \big(\sonBarr[\denotesx] (-\sxTwo\syaw[\denoteHf] \su[\denoteHf] + \sv[\denoteLf]\sin\rBrack[{\su[\denoteLf]}])&\\
        +\sonBarr[\denotesy] ( \sxOne\syaw[\denoteHf] \su[\denoteHf] + \sv[\denoteLf]\cos\rBrack[{\su[\denoteLf]}] - \sv[\denoteHf] )\big)&
        = 0.
    \end{aligned}
\end{equation}
%
%
By exploiting the structure of \eqref{eq:ne_along_semsur}, the optimal $ \su[\denoteHf] $ follows from linear minimization, and the optimal $ \su[\denoteLf] $ from trigonometric maximization, requiring $ \vxdot[\denoteLf] $ to align with $ \vonBarr $ to maximize the inner product $ \sv[\denoteLf] (\sonBarr[\denotesx] \sin\rBrack[{\su[\denoteLf]}] + \sonBarr[\denotesy] \cos\rBrack[{\su[\denoteLf]}]) $ in \eqref{eq:ne_along_semsur}:
\begin{subequations}\label{eq:sols}
    \begin{align}
        \su[\denoteLf]^{\denoteOpt} &= \arcsin\rBrack[{\sonBarr[\denotesx] / \Euclidean[{\vonBarr}]}] = \arccos\rBrack[{\sonBarr[\denotesy] / \Euclidean[{\vonBarr}]}],\\
        \su[\denoteHf]^{\denoteOpt} &= \signFunc \rBrack[{\sonBarr[\denotesx] \sxTwo - \sonBarr[\denotesy] \sxOne}].\label{eq:sols_up}
    \end{align}
\end{subequations}
Using \eqref{eq:sols} and \eqref{eq:xi_diff} yields the adjoint dynamics
%
\begin{equation}\label{eq:ne_ev_onsemsur}
    \vonBarrdot = \begin{bmatrix} -\sonBarr[\denotesy] \syaw[\denoteHf] \su[\denoteHf]^{\denoteOpt} & \sonBarr[\denotesx] \syaw[\denoteHf] \su[\denoteHf]^{\denoteOpt} \end{bmatrix}^{\Transpose},
\end{equation}
corresponding to a planar rotation, which preserves the magnitude. Integrating \eqref{eq:ne_ev_onsemsur} yields the sinusoidal solution
\begin{equation}\label{eq:sol_onbarr}
    \vonBarr = 
    \begin{bmatrix}
        \sonBarr[\denotesx] \\ \sonBarr[\denotesy]
    \end{bmatrix}
    =
    \begin{bmatrix}
        \randomVar \cos(\syaw[\denoteHf]\su[\denoteHf]^{\denoteOpt} \cdot \rBrack[{\stime - \stimeBip}] + \randomAngle)\\
        \randomVar \sin(\syaw[\denoteHf]\su[\denoteHf]^{\denoteOpt} \cdot \rBrack[{\stime - \stimeBip}] + \randomAngle)\\
    \end{bmatrix},
\end{equation}
where $ \randomVar \in \Reals $ is the constant magnitude, $ \randomAngle \in \Reals $ is the angular offset, and $ \pm\syaw[\denoteHf] $ is the piecewise constant angular rate. At $ \stime = \stimeBip $, \eqref{eq:xi_cond} must hold. Equation \eqref{eq:barr_init_cond} and $ \vonScap = \vx $ yield
\begin{equation}\label{eq:num_ex_xi_cond}
    \randomAngle = \arcsin \!\left(\frac{\sizeScap \sv[\denoteLf]}{\randomVar\sv[\denoteHf]}\right) = \arccos (\sizeScap \sqrt{1-(\sv[\denoteLf]/\sv[\denoteHf])^2}/\randomVar).
\end{equation}
As only the direction of $ \vonBarr $ is relevant, $ \sizeScap/\randomVar $ is neglected in \eqref{eq:num_ex_xi_cond}. The trajectories $ \vSolTrj[\pp,\sizeScap]^{\Ssemsur} $ result from solving
\begin{equation}\label{eq:num_ex_sem_sur}
    \begin{bmatrix}
        \sxOnedot \\ \sxTwodot
    \end{bmatrix}
    =
    \begin{bmatrix*}[l]
        -          \sxTwo\syaw[\denoteHf] \su[\denoteHf]^{\denoteOpt} + \sv[\denoteLf]\cos(\syaw[\denoteHf]\su[\denoteHf]^{\denoteOpt} \cdot \rBrack[{\stime - \stimeBip}] + \randomAngle) \\
        \phantom{-} \sxOne\syaw[\denoteHf] \su[\denoteHf]^{\denoteOpt} + \sv[\denoteLf]\sin(\syaw[\denoteHf]\su[\denoteHf]^{\denoteOpt} \cdot \rBrack[{\stime - \stimeBip}] + \randomAngle) - \sv[\denoteHf] \\
    \end{bmatrix*}
\end{equation}
for both states in \eqref{eq:barr_init_cond} that must hold at $ \stime = \stimeBip $ (see \eqref{eq:rel_dyn_barr}, \eqref{eq:x_cond}). The two resulting $ \vSolTrj[\pp,\sizeScap]^{\Ssemsur} $ are symmetrical about the $ \sxTwo $-axis.

For better comparison with \gls*{fastrack}, we require $ \setlength{\arraycolsep}{2.5pt}\overline\vcondSol(\sizeScap) = \big\{\begin{bmatrix} 0 & \sxTwo[\denoteZero] \end{bmatrix}^{\Transpose}\big\},\, \Abs[{\sxTwo[\denoteZero]}] \leq \sizeScap $ so that the resulting \gls*{TEB} is connected. Optimizing $ \pp $ w.r.t. \eqref{eq:dderrmeas}, \eqref{eq:barr_inside_scap}, \eqref{eq:goes_int} yields $ \setlength{\arraycolsep}{2.5pt} \overline\vcondSol(\sizeScap) = \big\{\begin{bmatrix} 0 & \sizeScap \end{bmatrix}^{\Transpose}\big\} $. So
\begin{equation}\label{eq:ne_condition}
    \vSolTrj[\pp,\sizeScap]^{\Ssemsur} \big\vert_{\stime = \hat{\stime}} = \begin{bmatrix} 0 & \sizeScap \end{bmatrix}^{\Transpose}
\end{equation}
must hold for both $ \vSolTrj[\pp,\sizeScap]^{\Ssemsur} $.
\subsection{Addressing \ref{O.0}, \ref{O.1}, and \ref{O.2} Using the Game}\label{sec:ne_solving}
In the sequel, we use $ \sv[\denoteHf] = \SI{1}{\meter \per \second} $ and $ \syaw[\denoteHf] = \SI{2\pi}{\radian \per \second} $.
\subsubsection{Addressing \ref{O.1}}\label{sec:ne_objective1}
Consider the safety margin to be given as $ \ssafetyMargin = \SI{0.25}{\meter} $. Solving \eqref{eq:ne_condition} with $ \sizeScap = \ssafetyMargin $ results in $ \sv[\denoteLf] \approx \SI{0.10}{\meter \per \second} $.\footnote{\label{fn:accurate_solution_exists}An accurate but lengthy solution exists.} The resulting \gls*{TEB} $ \Steb $ is depicted in \autoref{fig:025}.
\subsubsection{Addressing \ref{O.2}}\label{sec:ne_objective2}
The safety controller results from \eqref{eq:sols_up}:
\begin{equation}
            \vSCtrlLaw\rBrack[\vx] =
        \left\{\setlength{\arraycolsep}{3.5pt}
        \begin{array}{ll}
             \phantom{-}1   & \text{if } \vx \in \boundary\Steb \text{ at } \raisebox{.5pt}{\textcircled{\raisebox{-.9pt} {1}}} \text{ or } \raisebox{.5pt}{\textcircled{\raisebox{-.9pt} {3}}},\\
            -1 & \text{if } \vx \in \boundary\Steb \text{ at } \raisebox{.5pt}{\textcircled{\raisebox{-.9pt} {2}}} \text{ or } \raisebox{.5pt}{\textcircled{\raisebox{-.9pt} {4}}},\\
            \text{any } \su[\denoteHf] \in [-1,1] & \text{if } \vx \in \mathrm{int}(\Steb) \cup \text{\gls*{IP}}\backslash \text{\gls*{BIP}}.
        \end{array}
        \right.
\end{equation}
\subsubsection{Addressing \ref{O.0}}\label{sec:ne_objective0}
Consider the inverse of the task solved in \autoref{sec:ne_objective1}, with $ \sv[\denoteLf] = \SI{0.10}{\meter \per \second} $ given. Solving \eqref{eq:ne_condition} with $ \pp = \sv[\denoteLf] $ results in $ \ssafetyMargin = \sizeScap \approx \SI{0.25}{\meter} $.\footref{fn:accurate_solution_exists}
\subsection{Comparison to \glsentryshort{fastrack}}\label{sec:ne_comparison}
For comparing our method to the state of the art, \ref{O.0} is addressed with the method in \gls*{fastrack} for both $ \sv[\denoteLf] = \SI{0.10}{\meter \per \second} $ and $ \sv[\denoteLf] = \SI{0.50}{\meter \per \second} $. To this end, we use the helperOC\footnote{\url{https://www.github.com/HJReachability/helperOC}} toolbox and discretize the state space with $ \Delta \sxOne = \Delta\sxTwo = \SI{1e-3}{\meter} $ and $ \Delta\so[\denoteHf] = \SI{8.73e-3}{\radian} $.\footnote{Using \eqref{eq:num_ex_rel_state} results in numerical oscillations that corrupt the solution. Therefore, $ \tilde{\vx} = \matchX\vx[\denoteLf] - \vx[\denoteHf] $ is used, for which the \gls*{fastrack} solution is stable.} The resulting \glspl*{TEB} for $ \sv[\denoteLf] = \SI{0.10}{\meter \per \second} $ and $ \sv[\denoteLf] = \SI{0.50}{\meter \per \second} $ are depicted in \autoref{fig:025_comp} and \autoref{fig:042_comp}, respectively.
\subsection{Discussion}
All presented results are computed on an Intel\textregistered\ Core\texttrademark\ i9-12900 processor with $ \SI{128}{\giga \byte} $ RAM. The CPU time for determining the \gls*{TEB} with \gls*{fastrack} at the stated accuracy is $ \SI{3.13e5}{\second} $ for $ \sv[\denoteLf] = \SI{0.10}{\meter \per \second} $ and $ \SI{1.15e6}{\second} $ for $ \sv[\denoteLf] = \SI{0.50}{\meter \per \second} $. The CPU time for addressing \ref{O.0} with our method is $ \SI{18.17}{\second} $ (see \autoref{sec:ne_objective0}), independently of the value of $ \sv[\denoteLf] $. The CPU time for addressing \ref{O.1} with our method is $ \SI{1.79}{\second} $ (see \autoref{sec:ne_objective1}), and the optimization for determining $ \overline{\vcondSol}(\sizeScap) $ takes less than $ \SI{1}{\second} $. The computation time of our method can be reduced to less than $ \SI{10}{\milli\second} $ by solving \eqref{eq:manifold} numerically using a C++ implementation, enabling online adaptation of the planning performance to a given safety margin to mitigate conservatism in safe motion generation.\footnote{Note that it can still be checked if the numerical accuracy of the result ensures safety by inserting the result into \eqref{eq:manifold}.}

For addressing \ref{O.0}, there is no significant difference between the \gls*{TEB} determined with the presented method and the \gls*{TEB} determined with \gls*{fastrack} for $ \sv[\denoteLf] = \SI{0.10}{\meter \per \second} $ (see \autoref{fig:025_comp}). However, for $ \sv[\denoteLf] = \SI{0.50}{\meter \per \second} $, there is a minor difference in the two \glspl*{TEB} (see the enlarged region in \autoref{fig:042_comp}), indicating the effects of the numerical methods involved in \gls*{fastrack}.\footnote{The discrepancy is hypothetically rooted in the kink in $ \Sbarr $ (at the states in which $ \su[\denoteHf]^{\denoteOpt} $ switches), which causes dissipation as described in \autocite[Sec. 2.2.1]{mitchell.2002}.} This demonstrates that, in the presented example, aside from numerical discrepancies, our method yields the same \gls*{TEB} as the method in \gls*{fastrack}, which is the least conservative method for addressing \ref{O.0} in the literature. For other benchmark systems we have analyzed, the resulting \glspl*{TEB} also coincide, suggesting the hypothesis that the presented method yields the same \gls*{TEB} as \gls*{fastrack} for combinations of systems complying with Definition~\ref{def:model_hf} and Definition~\ref{def:model_lf}; however, a formal proof is still under development.
\section{Conclusion}\label{sec:08_conclusion}
\enlargethispage{-2\baselineskip}
In this paper, we have presented a method that addresses the conservatism, computational effort, and limited numerical accuracy of existing frameworks and methods that ensure safety in online model-based motion generation. In contrast to existing approaches that determine a safety margin for a given pair of models, we adopt a different perspective and directly adapt the performance of the model used for planning to a given safety margin, thereby enabling the mitigation of conservatism in existing safe motion generation frameworks. By leveraging a captivity-escape game, a novel zero-sum differential game formulated in this paper, our method requires significantly less computation time than the state of the art, and in contrast to the state-of-the-art method, our method yields numerically accurate results. With these benefits, our method complements established frameworks, making them even more effective.

\section*{References}

\printbibliography[heading=none]

\end{document}